\tikzset{every picture/.style={>=stealth,bend angle=20}}
\tikzset{every label/.style={font=\scriptsize}}
\tikzset{every node/.style={font=\scriptsize}}
\tikzset{play/.style={circle,draw,minimum size=#1}}
\tikzset{play/.default=0.625cm}
\tikzset{prob/.style={diamond,draw,minimum size=#1}}
\tikzset{prob/.default=0.7cm}
\tikzset{end/.style={rectangle,draw,minimum size=#1}}
\tikzset{end/.default=0.55cm}
\title{The Complexity of Nash Equilibria in\\ Simple Stochastic Multiplayer
Games}
\author{Michael Ummels\addr{1} and Dominik Wojtczak\addr{2}\fnmsep\addr{3}}
\address{RWTH Aachen University, Germany \\
\email{ummels@logic.rwth-aachen.de}}
\address{CWI, Amsterdam, The Netherlands \\
\email{d.k.wojtczak@cwi.nl}}
\address{University of Edinburgh, UK}
\DeclareMathOperator{\true}{true}
\DeclareMathOperator{\false}{false}
\DeclareMathOperator{\Prob}{Pr}
\DeclareMathOperator{\Reach}{Reach}
\DeclareMathOperator{\init}{init}
\DeclareMathOperator{\inc}{inc}
\DeclareMathOperator{\dec}{dec}
\DeclareMathOperator{\zero}{zero}
\newcommand{\Inc}[2]{\text{``inc($#1$); goto~$#2$''}}
\newcommand{\Test}[3]{\text{``zero($#1$) ? goto~$#2$ : dec($#1$); goto~$#3$''}}
\newcommand{\Halt}{\text{``halt''}}
\newcommand{\NE}{\mbox{NE}\xspace}
\newcommand{\PureNE}{\mbox{PureNE}\xspace}
\newcommand{\StatNE}{\mbox{StatNE}\xspace}
\newcommand{\PosNE}{\mbox{PosNE}\xspace}
\newcommand{\FinNE}{\mbox{FinNE}\xspace}
\newcommand{\SAT}{\mbox{SAT}\xspace}
\newcommand{\SqrtSum}{\mbox{SqrtSum}\xspace}
\newcommand{\ExThR}{\ensuremath{\mathrm{ExTh}(\mathfrak{R})}\xspace}
\newcommand{\PPAD}{\mbox{PPAD}\xspace}
\newcommand{\cf}{cf.\xspace}
\newcommand{\ea}{\& al.\xspace}
\newcommand{\eg}{e.g.\xspace}
\newcommand{\ie}{i.e.\xspace}
\begin{document}

\maketitle

\begin{abstract}
We analyse the computational complexity of finding Nash equilibria in simple
stochastic multiplayer games. We show that restricting the search space to
equilibria whose payoffs fall into a certain interval may lead to
undecidability.
In particular, we prove that the following problem is undecidable: Given a
game~$\calG$, does there exist a pure-strategy Nash equilibrium of $\calG$
where player~0 wins with probability~1. Moreover, this problem remains
undecidable if it is restricted to strategies with (unbounded) finite memory.
However, if mixed strategies are allowed, decidability remains an open problem.
One way to obtain a provably decidable variant of the problem is to restrict
the strategies to be positional or stationary. For the complexity of these two
problems, we obtain a common lower bound of \NP and upper bounds of \NP and
\PSpace respectively.
\end{abstract}

\section{Introduction}

We study \emph{stochastic games} \cite{NeymanS99} played by multiple
players on a finite, directed graph. Intuitively, a play of such a game evolves
by moving a token along edges of the graph: Each vertex of the graph is either
controlled by one of the players, or it is a \emph{stochastic vertex}. Whenever
the token arrives at a non-stochastic vertex, the player who controls this
vertex must move the token to a successor vertex; when the token arrives at a
stochastic vertex, a fixed probability distribution determines the next vertex.
The play ends when it reaches a terminal vertex, in which case
each player receives a payoff. In the simplest case, which
we discuss here, the possible payoffs of a single play are just 0 and 1
(\ie each player either wins or loses a given play). However,
due to the presence of stochastic vertices, a player's \emph{expected payoff}
(\ie her probability of winning) can be an arbitrary probability.

Stochastic games have been successfully applied in the verification and
synthesis of reactive systems under the influence of random events. Such a 
system is usually modelled as a game between the
system and its environment, where the environment's objective is the complement
of the system's objective: the environment is considered hostile. Therefore,
traditionally, the research in this area has concentrated on two-player games
where each play is won by precisely one of the two players, so-called
\emph{two-player, zero-sum games}. However, the system may comprise of
several components with independent objectives, a situation which is naturally
modelled by a multiplayer game.

The most common interpretation of rational behaviour in multiplayer games is
captured by the notion of a \emph{Nash equilibrium} \cite{Nash50}. In a Nash
equilibrium, no player can improve her payoff by unilaterally switching to a
different strategy. Chatterjee~\ea \cite{ChatterjeeJM04} showed that
any simple stochastic multiplayer game has a Nash equilibrium, and they also
gave an algorithm for computing one. We argue that this is
not satisfactory. Indeed, it can be shown that their algorithm may compute an
equilibrium where all players lose almost surely (\ie receive expected
payoff~0), while there exist other equilibria where all players win almost
surely (\ie receive expected payoff~1).

In applications, one might look for an equilibrium where as many players as
possible win almost surely or where it is guaranteed that the expected payoff
of the equilibrium falls into a certain interval. Formulated as a
decision problem, we want to know, given a $k$-player game~$\calG$ with
initial vertex~$v_0$ and two thresholds $\vec{x},\vec{y}\in [0,1]^k$,
whether $(\calG,v_0)$ has a Nash equilibrium with expected payoff at
least~$\vec{x}$ and at most~$\vec{y}$. This problem, which we call \NE for
short, is a generalisation of Condon's \emph{SSG Problem} \cite{Condon92}
asking whether in a two-player, zero-sum game one of the two players, say
player~0, has a strategy to win the game with probability at
least~$\smallfrac{1}{2}$.

The problem \NE comes in several variants, depending on the type of strategies
one considers: On the one hand, strategies may be \emph{mixed} (allowing
randomisation over actions) or \emph{pure} (not allowing such randomisation).
On the other hand, one can restrict to strategies that use (unbounded or
bounded) finite memory or even to \emph{stationary} ones (strategies that do not
use any memory at all). For the SSG Problem, this distinction is not
meaningful since in a two-player, zero-sum simple stochastic game both players
have an optimal positional (\ie both pure and stationary) strategy
\cite{Condon92}. However, regarding \NE this distinction leads to distinct
decision problems, which have to be analysed separately.

Our main result is that \NE is undecidable if only pure strategies are
considered. In fact, even the following, presumably simpler, problem is
undecidable: Given a game~$\calG$, decide whether there exists a pure Nash
equilibrium where player~0 wins almost surely. Moreover, the problem
remains undecidable if one restricts to pure strategies that use (unbounded)
finite memory. However, for the general case of arbitrary mixed strategies,
decidability remains an open problem.

If one restricts to simpler types of strategies like stationary ones,
the problem becomes provably decidable. In particular, for
positional strategies the problem becomes \NP-complete, and for
arbitrary stationary strategies the problem is \NP-hard but contained in
\PSpace. We also relate the complexity of the latter problem to the complexity
of the infamous \emph{Square Root Sum Problem} (\SqrtSum) by providing a
polynomial-time reduction from \SqrtSum to \NE with the restriction to
stationary strategies. It is a long-standing open problem whether \SqrtSum
falls into the polynomial hierarchy; hence, showing that \NE for stationary
strategies lies inside the polynomial hierarchy would imply a breakthrough
in complexity theory.

Let us remark that our game model is rather restrictive: Firstly, players
receive a payoff only at terminal vertices. In the literature,
a plethora of game models with more complicated modes of winning have been
discussed. In particular, the model of a \emph{stochastic parity game}
\cite{ChatterjeeJH04,Zielonka04} has been investigated thoroughly.
Secondly, our model is \emph{turn-based} (\ie for every non-stochastic vertex
there is only one player who controls this vertex) as opposed to
\emph{concurrent} \cite{AlfaroHK98,AlfaroH00}. The reason that we have chosen
to analyse such a restrictive model is that we are focussing on negative
results. Indeed, all our lower bounds hold for (multiplayer versions of) the
aforementioned models. Moreover, besides Nash equilibria, our negative results
apply to several other solution concepts like subgame perfect
equilibria \cite{Selten65,Ummels06} and secure equilibria
\cite{ChatterjeeHJ06}.

For games with \emph{rewards} on transitions \cite{FilarV97}, the situation
might be different: While our lower bounds can be applied to games with rewards
under the \emph{average reward} or the \emph{total expected reward} criterion,
we leave it as an open question whether this remains true in the case of
discounted rewards.

\subsubsection{Related Work}
Determining the complexity of Nash Equilibria has attracted much
interest in recent years. In particular, a series of papers culminated in the
result that computing a Nash equilibrium of a two-player game in strategic form
is complete for the complexity class \PPAD \cite{DaskalakisGP06,ChenD06}. More
in the spirit of our work, Conitzer and Sandholm \cite{ConitzerS03} showed that
deciding whether there exists a Nash equilibrium in a two-player game in
strategic form where player~0 receives payoff at least $x$ and related
decision problems are all \NP-hard. For infinite games (without stochastic
vertices), (a qualitative version of) the problem \NE was studied in
\cite{Ummels08}. In particular, it was shown that the problem is \NP-complete
for games with parity winning conditions and even in \PTime for games with
B\"uchi winning conditions.

For stochastic games, most results concern the classical SSG problem: Condon
showed that the problem is in $\NP\cap\coNP$ \cite{Condon92}, but it is not
known to be in \PTime. We are only aware of two results that are closely
related to our problem: Firstly, Etessami \ea \cite{EtessamiKVY08} investigated
Markov decision processes with, \eg, multiple reachability objectives. Such a
system can be viewed as a stochastic multiplayer game where all non-stochastic
vertices are controlled by one single player. Under this interpretation, one of
their results states that \NE is decidable in polynomial time for such games.
Secondly, Chatterjee \ea \cite{ChatterjeeJM04} showed that the problem of
deciding whether a (concurrent) stochastic game with reachability objectives
has a positional-strategy Nash equilibrium with payoff at least $\vec{x}$ is
\NP-complete. We sharpen their hardness result by showing
that the problem remains \NP-hard when it is restricted to games with only
three players (as opposed to an unbounded number of players) where,
additionally, payoffs are assigned at terminal vertices only (\cf
\cref{thm:np-hardness} and the subsequent remark).

\section{Simple stochastic multiplayer games}

The model of a \emph{(two-player, zero-sum) simple stochastic game}, introduced
by Condon~\cite{Condon92}, easily generalises to the multiplayer case:
Formally, we define a \emph{simple stochastic multiplayer game (SSMG)} as a
tuple $\calG=(\Pi,V,(V_i)_{i\in\Pi},\Delta,(F_i)_{i\in\Pi})$ such that:
\begin{itemize}
 \item $\Pi$ is a finite set of \emph{players}
  (usually $\Pi=\{0,1,\dots,k-1\}$);
 \item $V$ is a finite set of \emph{vertices};
 \item $V_i\subseteq V$ and $V_i\cap V_j=\emptyset$ for each $i\not=j\in\Pi$;
 \item $\Delta\subseteq V\times([0,1]\cup\{\bot\})\times V$ is the
  \emph{transition relation};
 \item $F_i\subseteq V$ for each $i\in\Pi$.
\end{itemize}
We call a vertex $v\in V_i$ \emph{controlled by player~$i$} and a vertex
that is not contained in any of the sets $V_i$ a \emph{stochastic} vertex.
We require that a transition is labelled by a probability iff
it originates in a stochastic vertex: If $(v,p,w)\in\Delta$ then $p\in
[0,1]$ if $v$ is a stochastic vertex and $p=\bot$ if $v\in V_i$ for some $i\in
\Pi$. Moreover, for each pair of a stochastic vertex~$v$ and an arbitrary
vertex~$w$, we require that there exists precisely one $p\in[0,1]$ such that
$(v,p,w)\in\Delta$. For computational purposes, we require additionally that
all these probabilities are rational.

For a given vertex $v\in V$, we denote the set of all $w\in
V$ such that there exists $p\in(0,1]\cup\{\bot\}$ with $(v,p,w)\in\Delta$ by
$v\Delta$.
For technical reasons, we require that $v\Delta\not=\emptyset$ for all
$v\in V$. Moreover, for each stochastic vertex $v$, the outgoing probabilities
must sum up to 1: $\sum_{(p,w):(v,p,w)\in \Delta} p=1$.
Finally, we require that each vertex~$v$ that lies in one of the sets $F_i$ is a
\emph{terminal (sink) vertex}: $v\Delta=\{v\}$. So if $F$ is the set
of all terminal vertices, then $F_i\subseteq F$ for each $i\in\Pi$.

A \emph{(mixed) strategy of player~$i$ in $\calG$} is a mapping
$\sigma:V^*V_i\to\calD(V)$ assigning to each possible \emph{history}
$xv\in V^* V_i$ of vertices ending in a vertex controlled by player~$i$ a
(discrete) probability distribution over $V$ such that $\sigma(xv)(w)>0$ only
if $(v,\bot,w)\in\Delta$. Instead of $\sigma(xv)(w)$, we usually write
$\sigma(w\mid xv)$.
A \emph{(mixed) strategy profile of $\calG$} is a tuple
$\vec{\sigma}=(\sigma_i)_{i\in\Pi}$ where $\sigma_i$ is a strategy of
player~$i$ in $\calG$.
Given a strategy profile $\vec{\sigma}=(\sigma_j)_{j\in\Pi}$ and a
strategy~$\tau$ of player~$i$, we denote by $(\vec{\sigma}_{-i},
\tau)$ the strategy profile resulting from $\vec{\sigma}$ by replacing
$\sigma_i$ with $\tau$.

A strategy $\sigma$ of player~$i$ is called \emph{pure}
if for each $xv\in V^*V_i$ there exists $w\in v\Delta$ with $\sigma(w\mid xv)
=1$. Note that a pure strategy of player~$i$ can be identified with a function
$\sigma:V^*V_i \to V$. A strategy profile $\vec{\sigma}=(\sigma_i)_{i\in\Pi}$
is called \emph{pure} if each $\sigma_i$ is pure.

A strategy $\sigma$ of player~$i$ in $\calG$ is called \emph{stationary}
if $\sigma$ depends only on the current vertex: $\sigma(xv)=\sigma(v)$
for all $xv\in V^*V_i$. Hence, a stationary strategy of player~$i$ can be
identified with a function $\sigma:V_i\to\calD(V)$.
A strategy profile $\vec{\sigma}=(\sigma_i)_{i\in\Pi}$ of $\calG$ is called
\emph{stationary} if each $\sigma_i$ is stationary.

We call a pure, stationary strategy a \emph{positional strategy} and
a strategy profile consisting of positional strategies only a
\emph{positional strategy profile}. Clearly, a positional strategy of
player~$i$ can be identified with a function $\sigma:V_i\to V$.
More generally, a pure strategy $\sigma$ is called \emph{finite-state} if it
can be implemented by a finite
automaton with output or, equivalently, if the equivalence relation
$\mathord{\sim}\subseteq V^*\times V^*$ defined by $x\sim y$ if $\sigma(xz)=
\sigma(yz)$ for all $z\in V^*V_i$ has only finitely many equivalence
classes.\footnote{In general, this definition is applicable to mixed strategies
as well, but for this paper we will identify finite-state strategies with pure
finite-state strategies.} Finally, a \emph{finite-state strategy profile} is a
profile consisting of finite-state strategies only.

It is sometimes convenient to designate an initial vertex $v_0\in V$ of the
game. We call the tuple $(\mathcal{G},v_0)$ an \emph{initialised SSMG}.
A strategy (strategy profile)
of $(\mathcal{G},v_0)$ is just a strategy (strategy profile) of $\mathcal{G}$.
In the following, we will use the abbreviation SSMG also for initialised SSMGs.
It should always be clear from the context if the game is initialised or not.

Given an SSMG $(\calG,v_0)$ and a strategy profile $\vec{\sigma}=
(\sigma_i)_{i\in\Pi}$, the \emph{conditional probability of $w\in V$ given the
history $xv\in V^*V$} is the number $\sigma_i(w\mid xv)$ if $v\in V_i$ and the
unique $p\in [0,1]$ such that $(v,p,w)\in\Delta$ if $v$ is a stochastic vertex.
We abuse notation and denote this probability by $\vec{\sigma}(w\mid xv)$. The
probabilities $\vec{\sigma}(w\mid xv)$ induce a probability measure on the
space $V^\omega$ in the following way: The probability of a basic open set
$v_1\dots v_k\cdot V^\omega$ is 0 if $v_1\not=v_0$ and the product of the
probabilities $\vec{\sigma}(v_j\mid v_1\dots v_{j-1})$ for $j=2,\dots,k$
otherwise. It is a classical result of measure theory that this extends to a
unique probability measure assigning a probability to every Borel subset of
$V^\omega$, which we denote by $\Prob_{v_0}^{\vec{\sigma}}$.

For a set $U\subseteq V$, let $\Reach(U):=V^*\cdot U\cdot V^\omega$. We
are mainly interested in the probabilities $p_i:=\Prob_{v_0}^{\vec{\sigma}}
(\Reach(F_i))$ of reaching the sets $F_i$. We call $p_i$ the \emph{(expected)
payoff of $\vec{\sigma}$ for player~$i$} and the vector $(p_i)_{i\in\Pi}$ the
\emph{(expected) payoff of $\vec{\sigma}$}.
Another way to define these probabilities is via the \emph{Markov chain}
$\calG^{\vec{\sigma}}$ which is defined as follows: The state set of
$\calG^{\vec{\sigma}}$ is $V^+$ (the set of all nonempty sequences of
vertices), and the probability of going from state~$xv$ to
state~$xvw$ ($x\in V^*$, $v,w\in V$) is equal to $\vec{\sigma}(w\mid xv)$. Then
the expected payoff of $\vec{\sigma}$ for player~$i$ can be computed as the
probability of reaching a state~$xv$ with $v\in F_i$ from state~$v_0$ in
$\calG^{\vec{\sigma}}$.

\subsubsection{Drawing an SSMG}
When drawing an SSMG as a graph, we will use the
following conventions: The initial vertex is marked by an incoming edge that
has no source vertex. Vertices that are controlled by a player are depicted as
circles, where the player who controls a vertex is given by the label next
to it. Stochastic vertices are depicted as diamonds, where the
transition probabilities are given by the labels on its outgoing edges (the
default being~\smallfrac{1}{2}). Finally, terminal
vertices are generally represented by their associated payoff vector. In fact,
we allow arbitrary vectors of rational probabilities as payoffs. This does not
increase the power of the model since such a payoff vector can easily be
realised by an SSMG consisting of stochastic and terminal vertices only.

\section{Nash equilibria}

To capture rational behaviour of (selfish) players, John Nash \cite{Nash50}
introduced the notion of, what is now called, a \emph{Nash equilibrium}. Formally,
given a strategy profile~$\vec{\sigma}$, a strategy~$\tau$ of player~$i$
is called a \emph{best response to $\vec{\sigma}$} if $\tau$ maximises
the expected payoff of player~$i$:
$\Prob_{v_0}^{(\vec{\sigma}_{-i},\tau')}(\Reach(F_i))\leq
\Prob_{v_0}^{(\vec{\sigma}_{-i},\tau)}(\Reach(F_i))$ for all strategies $\tau'$
of player~$i$. A Nash equilibrium is a strategy profile $\vec{\sigma}=
(\sigma_i)_{i\in\Pi}$ such that each $\sigma_i$ is a best response to
$\vec{\sigma}$. Hence, in a Nash equilibrium no player can improve
her payoff by (unilaterally) switching to a different strategy.

Previous research on algorithms for finding Nash equilibria in infinite games
has focused on computing \emph{some} Nash equilibrium \cite{ChatterjeeJM04}.
However, a game may have several Nash equilibria with different payoffs, and
one might not be interested in \emph{any} Nash equilibrium but in one whose
payoff fulfils certain requirements. For example, one might look for a
Nash equilibrium where certain players win almost surely while certain others
lose almost surely. This idea leads us to the following decision problem, which
we call \NE:\footnote{In the definition of \NE, the ordering $\leq$ is applied
componentwise.}
\begin{quote}
Given an SSMG $(\calG,v_0)$ and thresholds
$\vec{x},\vec{y}\in[0,1]^\Pi$, decide whether there exists a
Nash equilibrium of $(\calG,v_0)$ with payoff $\geq\vec{x}$ and
$\leq\vec{y}$.
\end{quote}
For computational purposes, we assume that the thresholds $\vec{x}$
and $\vec{y}$ are vectors of rational numbers.
A variant of the problem which omits the thresholds just asks about a Nash
equilibrium where some distinguished player, say player~0, wins with
probability~1:
\begin{quote}
Given an SSMG $(\calG,v_0)$, decide whether there exists a Nash equilibrium of
$(\calG,v_0)$ where player~0 wins almost surely.
\end{quote}
Clearly, every instance of the threshold-free variant can easily be turned into
an instance of \NE (by adding the thresholds $\vec{x}=(1,0,\ldots,0)$ and
$\vec{y}=(1,\ldots,1)$). Hence, \NE is, a priori, more general than its
threshold-free variant.

Our main concern in this paper are variants of \NE where we restrict the
type of strategies that are allowed in the definition of the problem: Let
\PureNE, \FinNE, \StatNE and \PosNE be the problems that arise from \NE by
restricting the desired Nash equilibrium to consist of pure strategies,
finite-state strategies, stationary strategies and positional strategies,
respectively. In the rest of this paper, we are going to prove upper and
lower bounds on the complexity of these problems, where all lower
bounds hold for the threshold-free variants, too.

Our first observation is that neither stationary nor pure strategies are
sufficient to implement any Nash equilibrium, even if we are only interested in
whether a player wins or loses almost surely in the
Nash equilibrium. Together with a result from \cref{section:undecidability}
(namely \cref{prop:inf-mem}), this demonstrates that the problems
\NE, \PureNE, \FinNE, \StatNE, and \PosNE are pairwise distinct problems,
which have to be analysed separately.

\begin{proposition}\label{prop:stationary-not-enough}
There exists an SSMG that has a finite-state Nash equilibrium where player~0
wins almost surely but that has no stationary Nash equilibrium where player~0
wins with positive probability.
\end{proposition}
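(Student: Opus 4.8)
The plan is to exhibit a single three-player SSMG $\calG$ (in which player~0 owns no vertex and simply wins or loses passively) where players~1 and~2 are locked into a repeated interaction by a stochastic loop, engineered so that the \emph{only} behaviour letting player~0 win is a cooperation between players~1 and~2 that is sustainable by mutual threats, and hence needs memory, but is irrational in any memoryless sense. Concretely, I would let the initial vertex $v_0$ belong to player~1, who moves either to a player-2 vertex $u^C$ (\emph{cooperate}) or to a player-2 vertex $u^D$ (\emph{defect}); player~2 then cooperates or defects, so that the pair of choices routes the token to one of four stochastic vertices $e_{CC},e_{CD},e_{DC},e_{DD}$. Each $e_{XY}$ returns to $v_0$ with probability $1-\lambda$ and, with probability $\lambda$, leads to a terminal with payoff $T_{XY}$, where I set $T_{CC}=(1,\frac{1}{2},\frac{1}{2})$, $T_{DC}=(0,1,0)$, $T_{CD}=(0,0,1)$ and $T_{DD}=(0,\frac{1}{4},\frac{1}{4})$ (rational payoff vectors, realised by the stochastic-and-terminal construction noted above). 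The numbers are chosen so that the one-round interaction is a Prisoner's Dilemma: at each of her vertices a player strictly prefers to defect whatever the memoryless continuation, yet mutual cooperation, reaching $T_{CC}$, is the unique outcome in which player~0 wins.

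For the positive direction I would analyse the grim-trigger profile in which players~1 and~2 cooperate while no defection has been seen and switch to \emph{defect forever} as soon as the token passes through some $e_{XY}$ with $(X,Y)\neq(C,C)$; this is implemented by a two-state automaton, so it is finite-state. On the equilibrium path the token cycles through $e_{CC}$ and therefore reaches $T_{CC}$ almost surely, so player~0 wins almost surely. That this is a Nash equilibrium reduces to two deviation inequalities: a one-shot defection earns the tempting terminal now (payoff~$1$ to the deviator, with probability~$\lambda$) but triggers the punishment phase, whose continuation value is only $\frac{1}{4}$. Comparing the cooperation value $\frac{1}{2}$ with $\lambda\cdot 1+(1-\lambda)\cdot\frac{1}{4}$ shows that player~2's cooperation is a best response exactly when $\lambda$ is small enough (any $\lambda\le\frac{1}{3}$ works), and player~1's deviation, worth only $\frac{1}{4}$, is unprofitable for every $\lambda$. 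Fixing such a $\lambda$ settles this direction.

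For the negative direction I would show that no stationary profile in which player~0 wins with positive probability is a Nash equilibrium, and here the argument is purely local. Under a fixed stationary profile let $W_2$ be player~2's value at $v_0$. At $u^C$ the two choices are worth $\lambda\cdot(T_{CC})_2+(1-\lambda)W_2$ and $\lambda\cdot(T_{CD})_2+(1-\lambda)W_2$; because a memoryless player's future conduct cannot depend on the present move, the two $W_2$-terms coincide, and since $(T_{CD})_2=1>\frac{1}{2}=(T_{CC})_2$ player~2 strictly prefers to defect. Hence in every stationary Nash equilibrium player~2 defects at $u^C$ with probability~$1$, the terminal $T_{CC}$ is never reached, and player~0's payoff is~$0$. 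Because the preference is strict, this also disposes of mixed stationary strategies, so no stationary equilibrium gives player~0 a positive winning probability.

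The step I expect to be the main obstacle is the simultaneous calibration of the payoffs and of the loop probability~$\lambda$. The negative direction needs defection to be \emph{strictly} dominant in the one-round game, so that stationarity forces it; the positive direction needs the punishment continuation to be \emph{low enough} that the threat deters deviation in the repeated game. These requirements pull against each other, and it is exactly the stochastic loop together with the bound $\lambda\le\frac{1}{3}$ that reconciles them. Verifying both deviation inequalities under one common choice of parameters, rather than either in isolation, is the crux of the construction.
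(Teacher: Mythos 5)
Your construction is correct, but it is a genuinely different game from the one in the paper, so it is worth contrasting the two. The paper's game (its Figure~1) is a compact, asymmetric ``alternating favours'' loop: player~1 can win only through the coin flip at $v_4$, player~2 wins only if player~1 exits at $v_2$, and two entry vertices with exit payoffs $\frac{1}{2}$ force any equilibrium in which player~0 wins to give both other players at least $\frac{1}{2}$. Its finite-state equilibrium is ``cooperate at the first visit, leave at the second'', and stationary impossibility follows because, against any positive continuation probability of player~2, player~1's best response is to \emph{never} exit, which starves player~2. You instead build a repeated Prisoner's Dilemma with a stochastic restart, use grim trigger for the finite-state equilibrium, and kill stationary equilibria by strict one-shot dominance of defection. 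Your game is a little larger, but the logic is the textbook folk-theorem argument, and your dominance argument handles mixed stationary strategies in one stroke, where the paper needs a short case analysis ($p>0$ forces player~1 to play to $v_3$ with probability~1, giving player~2 payoff~$0$).

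Two wrinkles, neither fatal. First, with your literal trigger (``defect once the token has passed some $e_{XY}\neq e_{CC}$''), player~2 still cooperates at $u^D$ in the round where player~1 first defects, since no bad $e_{XY}$ has yet been traversed; hence player~1's deviation is worth $\lambda+(1-\lambda)\frac{1}{4}$, not $\frac{1}{4}$, and the claim that it is ``unprofitable for every $\lambda$'' holds only if you let the trigger fire already upon seeing $u^D$. Under either reading the profile is an equilibrium exactly in the regime $\lambda\le\frac{1}{3}$ you fixed, so the calibration still works. Second, your claim that player~2 defects at $u^C$ in \emph{every} stationary equilibrium implicitly needs $u^C$ to be reached with positive probability, since a Nash equilibrium imposes no constraint at unreached vertices; this is harmless, because if $u^C$ is unreached then $T_{CC}$ is unreached and player~0 gets payoff~$0$ anyway, but the clean phrasing is: if player~0 wins with positive probability, then $u^C$ is reached and C is played there with positive probability, and then the strict gain from switching to D at $u^C$ contradicts the equilibrium property.
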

\begin{proof}
Consider the game~$\calG$ depicted in \cref{figure:no-stat-nash} played by
three players 0, 1 and 2 (with payoffs in this order).
Obviously, the following
finite-state strategy profile is a Nash equilibrium where player~0
wins almost surely:
Player~1 plays from vertex~$v_2$ to vertex~$v_3$ at the first visit of $v_2$
but leaves the game immediately (by playing to the neighbouring terminal
vertex) at all subsequent visits to $v_2$; from vertex~$v_0$ player~1 plays to
$v_1$; player~2 plays from vertex~$v_3$ to vertex~$v_4$ at the first visit of
$v_3$ but leaves the game immediately at all subsequent visits to $v_3$; from
vertex~$v_1$ player~2 plays to $v_2$.
\begin{figure}[ht]
\centering
\begin{tikzpicture}[x=1.5cm,y=1.3cm,->]
\node (init) at (-0.7,0) {};
\node (0) at (0,0) [play,label={above:1}] {$v_0$};
\node (pay0) at (0,-1) {$(0,\smallfrac{1}{2},0)$};
\node (1) at (1,0) [play,label={above:2}] {$v_1$};
\node (pay1) at (1,-1) {$(0,0,\smallfrac{1}{2})$};
\node (2) at (2,0) [play,label={above:1}] {$v_2$};
\node (pay2) at (2,-1) {$(1,0,1)$};
\node (3) at (2.75,1) [play,label={left:2}] {$v_3$};
\node (pay3) at (2.75,2) {$(0,0,0)$};
\node (4) at (3.5,0) [prob] {$v_4$};
\node (pay4) at (4.7,0) {$(1,1,0)$};

\draw (init) to (0);
\draw (0) to (1); \draw (0) to (pay0);
\draw (1) to (2); \draw (1) to (pay1);
\draw (2) to (3); \draw (2) to (pay2);
\draw (3) to (4); \draw (3) to (pay3);
\draw (4) to (2); \draw (4) to (pay4);
\end{tikzpicture}
\caption{\label{figure:no-stat-nash} An SSMG with three players}
\end{figure}
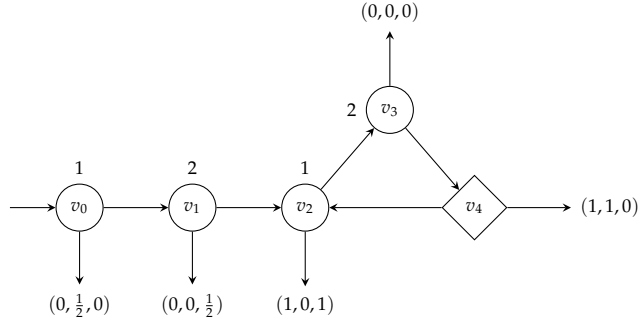

It remains to show that there is no stationary Nash equilibrium of
$(\calG,v_0)$ where player~0 wins with positive probability. Any stationary
Nash equilibrium of $(\calG,v_0)$ where player~0 wins with positive probability
induces a stationary Nash equilibrium of $(\calG,v_2)$ where both players~1 and
2 receive payoff at least \smallfrac{1}{2} since
otherwise one of these players could improve her payoff by changing her
strategy at $v_0$ or $v_1$.
Hence, it suffices to show that $(\calG,v_2)$ has no stationary Nash
equilibrium where both players 1 and 2 receive payoff at least
\smallfrac{1}{2}. Assume there exists such an equilibrium and denote
by $p$ the probability that player~2 plays from $v_3$ to $v_4$. Since
player~1 wins with probability~$>0$, it must be the case that $p>0$. But then,
to have a Nash equilibrium, player~1 must play from $v_2$ to $v_3$ with
probability~1, giving player~2 a payoff of 0, a contradiction.
\end{proof}

\begin{proposition}\label{prop:pure-not-enough}
There exists an SSMG that has a stationary Nash equilibrium where player~0
wins almost surely but that has no pure Nash equilibrium where player~0 wins
with positive probability.
\end{proposition}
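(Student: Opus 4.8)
The plan is to mirror the structure of the proof of \cref{prop:stationary-not-enough}: construct a small SSMG with a handful of players, exhibit a stationary Nash equilibrium where player~0 wins almost surely, and then show that no pure Nash equilibrium can let player~0 win with positive probability. The key intuition is dual to the previous proposition. There, the obstruction to stationary equilibria was a \emph{timing} phenomenon: a player needed to cooperate once (to let the token pass) and then defect forever, which a memoryless strategy cannot do. Here I want the opposite obstruction against \emph{pure} strategies, namely a situation where sustaining player~0's win requires genuine \emph{randomisation} by some other player, so that any deterministic choice breaks the equilibrium.

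\medskip
First I would build a gadget in which player~0 wins only if the token keeps cycling through a loop controlled by (say) players~1 and~2, and where each of these players is kept indifferent between staying in the loop and leaving to a terminal vertex. The trick is to choose the payoffs so that each player's incentive to remain in the cycle depends on the \emph{probability} with which the other player randomises: a stationary equilibrium can be sustained by both players mixing with carefully chosen probabilities (\eg each leaving with some probability $p\in(0,1)$), making player~0 win almost surely because the token almost surely eventually stays on a path through the appropriate terminal. I would draw this as a figure analogous to \cref{figure:no-stat-nash}, with controlled vertices arranged in a cycle, terminal payoff vectors chosen to enforce the indifference conditions, and a stochastic vertex if needed to tune the probabilities.

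\medskip
Next I would verify the stationary equilibrium: compute the expected payoffs as rational functions of the mixing probabilities using the Markov chain $\calG^{\vec{\sigma}}$, check that the chosen $p$ makes every player a best responder (each controlled vertex gives equal payoff under both actions, so no unilateral deviation helps), and confirm player~0's payoff equals~1 at the chosen profile. This is the routine computation. The harder and more delicate part is the \emph{lower-bound} direction: I must argue that \emph{every} pure strategy profile fails. Because pure strategies deterministically pick one successor at each history, I would show that in any pure profile where player~0 wins with positive probability, at least one of the loop-players is forced to make a deterministic choice that the other can then exploit — leaving the loop to grab a strictly better payoff, or conversely being trapped into payoff~$0$ — contradicting the Nash property. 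As in the previous proof, I would reduce to an initialised subgame (starting inside the loop) and derive a contradiction from the indifference structure, this time exploiting that deterministic commitment destroys the balance that only mixing can maintain.

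\medskip
The main obstacle I anticipate is designing the payoffs so that the indifference conditions are simultaneously satisfiable by a stationary mixed profile yet \emph{provably} unsatisfiable by any pure profile, and doing so with as few players and vertices as possible. In particular I must rule out not only the obvious pure loops but also pure strategies that exit the loop at some deterministic point; ensuring that every such deterministic exit gives some player a profitable deviation (or denies player~0 a positive winning probability) is where the case analysis will concentrate, and getting the payoff vectors right is the crux of the construction.
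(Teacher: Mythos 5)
There is a genuine gap: your proposal never exhibits a concrete game, and the mechanism you rest everything on --- players~1 and~2 mixing inside a loop, with each player's indifference ``depending on the probability with which the other player randomises'' --- breaks down at exactly the point you defer as ``the crux''. Write out the indifference equations for the simplest such loop: vertices $u_1$ (player~1) and $u_2$ (player~2), where each player either continues the cycle or exits to a terminal, with exit payoffs $(1,a_1,b_1)$ and $(1,a_2,b_2)$ (both must give player~0 payoff~1, since in your equilibrium the play exits almost surely and player~0 must win almost surely). If player~1 mixes at $u_1$ with both actions in the support, her value at $u_1$ equals $a_1$, and the equation $a_1 = q a_2 + (1-q) a_1$ forces $a_1=a_2$ whenever player~2's exit probability $q$ is positive; symmetrically $b_1=b_2$. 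So the indifference does \emph{not} depend on the mixing probabilities --- it forces the two exit terminals to pay players~1 and~2 identically, and then the \emph{pure} profile in which player~1 exits at her first visit is also a Nash equilibrium in which player~0 wins, defeating the purpose of the construction. To make the loop players' incentives genuinely depend on probabilities you would need stochastic vertices inside the loop (as in the gadget $\calG(p)$ of the \SqrtSum reduction), but then pure strategies with memory can tune conditional probabilities by choosing \emph{when} to exit --- this is precisely the effect the undecidability proof exploits --- and ruling out all such history-dependent pure profiles is a substantial argument that your plan does not supply.

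The paper's proof sidesteps all of this with a three-player \emph{acyclic} game (\cref{figure:no-pure-nash}) in which the randomisation is done by player~0 herself: player~1 at $v_0$ chooses between payoff $(0,\frac{1}{2},0)$ and moving on, player~2 at $v_1$ chooses between $(0,0,\frac{1}{2})$ and moving on, and player~0 at $v_2$ chooses between the terminals $(1,1,0)$ and $(1,0,1)$. Because player~0 wins at \emph{both} of her successors, any mixing by her is automatically a best response --- no indifference engineering is needed --- and mixing half--half hands players~1 and~2 exactly their outside options of $\frac{1}{2}$, so the profile is a stationary equilibrium with payoff $1$ for player~0. Conversely, under any pure profile the play is deterministic (there are no stochastic vertices), so if player~0 wins at all, the play ends at one of the two terminals, and whichever of players~1 and~2 receives $0$ there strictly improves by taking her $\frac{1}{2}$ exit earlier on the path; hence no pure equilibrium gives player~0 positive winning probability. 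The structural moral is the opposite of your plan: it is not the opponents' randomisation that must sustain the equilibrium, but the \emph{winner's} randomisation, used to split a surplus between the other players that no deterministic play can split.
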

\begin{proof}
Consider the game depicted in \cref{figure:no-pure-nash} played by three
players 0, 1 and 2 (with payoffs given in this order).
Clearly, the stationary strategy profile where from vertex~$v_2$ player~0
selects both outgoing edges with probability~\smallfrac{1}{2}\ each, player~1
plays from $v_0$ to $v_1$ and player~2 plays from $v_1$ to $v_2$ is
a Nash equilibrium where player~0 wins almost surely. However, for any pure
strategy profile where player~0 wins almost surely, either player~1 or
player~2 receives payoff~0 and could improve her payoff by switching her
strategy at $v_0$ or $v_1$ respectively.
\begin{figure}[ht]
\centering
\begin{tikzpicture}[x=1.5cm,y=1.3cm,->]
\node (init) at (-0.7,0) {};
\node (0) at (0,0) [play,label={above:1}] {$v_0$};
\node (1) at (1,0) [play,label={above:2}] {$v_1$};
\node (2) at (2,0) [play,label={above:0}] {$v_2$};
\node (3) at (2.75,-0.75) {$(1,1,0)$};
\node (4) at (2.75,0.75) {$(1,0,1)$};
\node (5) at (0,-1) {$(0,\smallfrac{1}{2},0)$};
\node (6) at (1,-1) {$(0,0,\smallfrac{1}{2})$};
\draw (init) -- (0);
\draw (0) -- (1);
\draw (0) -- (5);
\draw (1) -- (2);
\draw (1) -- (6);
\draw (2) -- (3);
\draw (2) -- (4);
\end{tikzpicture}
\caption{\label{figure:no-pure-nash} Another SSMG with three players}
\end{figure}
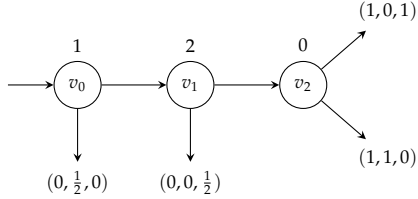
\end{proof}

\section{Decidable variants of \NE}

\subsection{Upper bounds}

In this section, we show that the problems \PosNE and \StatNE are contained in
the complexity classes \NP and \PSpace respectively.

\begin{theorem}\label{thm:pos-ne-np}
\PosNE is in \NP.
\end{theorem}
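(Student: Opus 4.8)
The plan is to show $\PosNE \in \NP$ by giving a nondeterministic polynomial-time procedure that guesses a positional strategy profile together with a compact witness for it being a Nash equilibrium with payoff in the required interval, and then verifies this witness in polynomial time. First I would observe that a positional strategy profile $\vec{\sigma}$ can be guessed in polynomial time, since it is simply a function $\sigma_i\colon V_i\to V$ for each player, \ie a single successor choice per controlled vertex. Once $\vec{\sigma}$ is fixed, the game collapses into a finite Markov chain on the vertex set $V$ (not on $V^+$, because positional strategies are memoryless), whose transition probabilities are either the stochastic probabilities from $\Delta$ or $1$ on the chosen edges. In this Markov chain the payoff $p_i = \Prob_{v_0}^{\vec{\sigma}}(\Reach(F_i))$ for each player is the solution of a linear system of equations over the rationals, and can therefore be computed exactly in polynomial time by Gaussian elimination. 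Checking $\vec{x}\le (p_i)_{i\in\Pi}\le\vec{y}$ is then immediate.

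The remaining obligation is to verify, in polynomial time, that the guessed $\vec{\sigma}$ really is a Nash equilibrium, \ie that no player~$i$ has a profitable deviation. Here the key point is that although a deviating strategy $\tau$ may be an arbitrary (mixed, history-dependent) strategy, the best response of player~$i$ against the \emph{fixed} positional strategies of the other players is computed in a Markov decision process: fixing $\vec{\sigma}_{-i}$ turns $\calG$ into an MDP controlled by player~$i$ alone, whose objective is to maximise the probability of reaching $F_i$. It is classical that in such a (finite) reachability MDP the optimal value is achieved by a positional strategy and equals the least solution of the associated linear optimality (Bellman) equations, which again can be computed in polynomial time by linear programming. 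So I would compute, for each player~$i$, the optimal reachability value $p_i^* := \sup_{\tau}\Prob_{v_0}^{(\vec{\sigma}_{-i},\tau)}(\Reach(F_i))$ in the MDP obtained by freezing the other players' positional choices, and check that $p_i^* \le p_i$, \ie player~$i$ cannot do better than her equilibrium payoff. The profile is a Nash equilibrium precisely when $p_i^*=p_i$ for all $i$.

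Putting this together yields the \NP algorithm: nondeterministically guess the positional profile $\vec{\sigma}$ (polynomially many bits); deterministically compute each payoff $p_i$ by solving a linear system; compute each best-response value $p_i^*$ by solving a reachability MDP (via linear programming in polynomial time); and accept iff $\vec{x}\le(p_i)_i\le\vec{y}$ and $p_i\le p_i^*$ fails to be strict for every~$i$, that is $p_i^* \le p_i$ for all~$i$. All arithmetic is over rationals whose bit-size stays polynomial in the input, so the whole verification runs in deterministic polynomial time.

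The step I expect to require the most care is the justification that it suffices to compare against \emph{positional} deviations when verifying the equilibrium condition: the definition of Nash equilibrium quantifies over \emph{all} strategies $\tau'$ of player~$i$, including mixed and infinite-memory ones, so I must invoke the standard theory of reachability MDPs to argue that the supremum of $\Prob_{v_0}^{(\vec{\sigma}_{-i},\tau')}(\Reach(F_i))$ over all $\tau'$ is attained by a positional strategy and is correctly given by the linear-programming value $p_i^*$. This is where the restrictiveness of the model---payoffs as reachability probabilities at terminal vertices---is essential, since it places the deviation problem squarely in the well-understood setting of finite MDPs with reachability objectives, for which positional optimality and polynomial-time solvability are known.
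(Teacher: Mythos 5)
Your proposal is correct and matches the paper's proof essentially step for step: guess the positional profile, compute each player's payoff in the induced Markov chain, compute each player's best-response value in the Markov decision process obtained by fixing the other players' strategies (invoking positional optimality for finite reachability MDPs), and accept iff the payoffs lie between the thresholds and no player's best-response value exceeds her equilibrium payoff. The only difference is cosmetic but worth noting: the paper computes both quantities by linear programming, minimising $\sum_v z^i_v$, which automatically selects the least solution of the equation system, whereas your Gaussian-elimination route for the Markov-chain payoffs needs the standard preprocessing of first zeroing out the vertices from which $F_i$ is unreachable, since the raw system $z^i_v=\sum_w \vec{\sigma}(w\mid v)z^i_w$ need not have a unique solution (e.g.\ at terminal vertices outside $F_i$).
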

\begin{proof}
Let $(\calG,v_0)$ be an SSMG. Any positional strategy profile of $\calG$ can be
identified with a mapping $\vec{\sigma}:\bigcup_{i\in\Pi} V_i\to V$ such that
$(v,\bot,\vec{\sigma}(v))\in\Delta$ for each non-stochastic vertex~$v$, an
object whose size is linear in the size of $\calG$. To prove that \PosNE is
in \NP, it suffices to show that we can check in polynomial time whether such
a mapping $\vec{\sigma}$ constitutes a Nash equilibrium whose payoff
lies in between the given thresholds $\vec{x}$ and $\vec{y}$.

First, we need to compute the payoff of $\vec{\sigma}$. Let
$z^i_v:=\Prob_v^{\vec{\sigma}}(\Reach(F_i))$ denote the expected payoff of
$\vec{\sigma}$ for player~$i$ in $(\calG,v)$, and let $\vec{z}^i=
(z^i_v)_{v\in V}$. It is a well-known result of the theory of Markov chains
that $\vec{z}^i$ is the optimal solution of the following linear programme:
\begin{align*}
\shortintertext{Minimise $\sum_{v\in V} z^i_v$, subject to:}
z^i_v & \geq 0 && \text{for $v\in V$,} \\
z^i_v &= 1 && \text{for $v \in F_i$,} \\
z^i_v &= \sum_{w \in V} \vec{\sigma}(w\mid v)\cdot z^i_w &&
\text{for $v\in V\setminus F_i$.}
\end{align*}
Once we have computed $\vec{z}^i$, we can check whether $x_i \leq
z^i_{v_0} \leq y_i$; this inequality holds for each player $i\in\Pi$ iff the
payoff of $\vec{\sigma}$ lies in between $\vec{x}$ and $\vec{y}$.

To check whether $\vec{\sigma}$ is a Nash equilibrium, we need to compute the
numbers $\sup_\tau \Prob_{v_0}^{(\vec{\sigma}_{-i},\tau)}(\Reach(F_i))$ (where
$\tau$ ranges over every strategy of player~$i$ in~$\calG$), the
maximal payoff that player~$i$ can achieve when playing against
$\vec{\sigma}_{-i}$. If this payoff is equal to $z^i_{v_0}$, then she
cannot gain anything by unilaterally switching to any other strategy.
From the theory of \emph{Markov decision process} (\cf \cite{Puterman94}),
it is well-known that the desired payoff can be computed by the following
linear programme over the variables $\vec{r}^i=(r^i_v)_{v\in V}$:
\begin{align*}
\shortintertext{Minimise $\sum_{v\in V} r^i_v$, subject to:}
r^i_v & \geq 0 && \text{for $v\in V$,} \\
r^i_v &= 1 && \text{for $v \in F_i$,} \\
r^i_v &\geq r^i_w && \text{for $v \in V_i$ and $w \in v\Delta$,} \\
r^i_v &= \sum_{w \in V} \vec{\sigma}(w\mid v)\cdot r^i_w &&
\text{for $v\in V\setminus V_i$.}
\end{align*}
To check whether $\vec{\sigma}$ is a Nash equilibrium, it suffices
to compute for each player~$i$ the optimal solution $\vec{r}^i$ and to check
whether $r^i_{v_0}=z^i_{v_0}$.

Since linear programmes can be solved in polynomial time and both programmes
are of size polynomial in the size of the game, all these checks can be carried
out in polynomial time.
\end{proof}

To prove the decidability of \StatNE, we appeal to results established
for the \emph{Existential Theory of the Reals}, \ExThR, the set of all
existential first-order sentences (over the appropriate signature) that hold in
$\frakR:=(\bbR,+,\cdot,0,1,\leq)$. The best known upper bound for the
complexity of the associated decision problem is \PSpace
\cite{Canny88,Renegar92}, which leads to the following theorem.

\begin{theorem}\label{thm:stat-ne-pspace}
\StatNE is in \PSpace.
\end{theorem}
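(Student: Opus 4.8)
The plan is to reduce \StatNE to the decision problem for \ExThR, which is in \PSpace by \cite{Canny88,Renegar92}. Given an SSMG $(\calG,v_0)$ and thresholds $\vec x,\vec y$, I would build a single existential first-order sentence over $\frakR$ that holds iff $(\calG,v_0)$ has a stationary Nash equilibrium with payoff in $[\vec x,\vec y]$, and whose size is polynomial in the input. The sentence existentially quantifies over: one variable $\vec\sigma(w\mid v)$ for every non-stochastic vertex $v$ and every $w\in v\Delta$ (encoding the stationary profile); the payoffs $z^i_v$ of this profile for each player $i$ and vertex $v$; and the best-response values $r^i_v:=\sup_\tau\Prob_v^{(\vec\sigma_{-i},\tau)}(\Reach(F_i))$. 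The easy constraints are then $\vec\sigma(w\mid v)\ge 0$ and $\sum_w\vec\sigma(w\mid v)=1$ (a distribution at each controlled vertex), $x_i\le z^i_{v_0}\le y_i$ (the thresholds), and $z^i_{v_0}=r^i_{v_0}$ for each $i$, which, since $r^i_{v_0}\ge z^i_{v_0}$ always holds, states exactly that no player can improve, i.e.\ that $\vec\sigma$ is a Nash equilibrium. Everything then hinges on forcing the variables $z^i_v$ and $r^i_v$ to take their intended values.

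For the payoffs $z^i_v$ I would reuse the linear equations from the proof of \cref{thm:pos-ne-np}, now with the \emph{variable} coefficients $\vec\sigma(w\mid v)$: that is, $z^i_v=1$ for $v\in F_i$ and $z^i_v=\sum_w\vec\sigma(w\mid v)\,z^i_w$ otherwise, together with $0\le z^i_v\le 1$. The subtlety --- and the main obstacle of the whole proof --- is that these fixed-point equations do not determine $z^i_v$ uniquely: any probability mass trapped in a bottom strongly connected component avoiding $F_i$ yields a spurious larger solution, and the true payoff is only the \emph{least} nonnegative solution. Selecting the least solution looks like it needs a universal quantifier, which I want to avoid in order to stay inside the existential theory. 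The plan is to pin it down by an existential \emph{reachability certificate}: I introduce fresh real rank variables $\ell^i_v$ and add, for each $v\notin F_i$, the constraint
\[
z^i_v>0\;\rightarrow\;\bigvee_{w\in v\Delta,\ \vec\sigma(w\mid v)>0}\bigl(z^i_w>0\wedge \ell^i_w<\ell^i_v\bigr).
\]
Since the ranks strictly decrease along a witnessed step, any walk following such steps is finite and can stop only at a vertex of $F_i$; hence the certificate is satisfiable iff every vertex with $z^i_v>0$ reaches $F_i$ with positive probability. This forces $z^i_v=0$ on all vertices from which $F_i$ is unreachable, and on the remaining vertices the equations have a unique solution (the restricted substochastic matrix has spectral radius below $1$), so $z^i_v$ is forced to equal the true payoff; conversely, the true payoff always admits such ranks (e.g.\ graph distance to $F_i$).

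For the best-response values $r^i_v$ I would encode the optimal value of the Markov decision process in which player~$i$ maximises the probability of reaching $F_i$ while all other vertices follow $\vec\sigma$. Here the Bellman equations read $r^i_v=1$ for $v\in F_i$; $r^i_v=\sum_w\vec\sigma(w\mid v)\,r^i_w$ for $v\notin V_i\cup F_i$; and, for $v\in V_i$, the maximum over successors, which I express without a $\max$ operator as $\bigwedge_{w\in v\Delta}r^i_v\ge r^i_w$ conjoined with $\bigvee_{w\in v\Delta}r^i_v=r^i_w$. Exactly as for the payoffs these equations have spurious fixed points (a positive-valued cycle of player-$i$ vertices avoiding $F_i$ satisfies them), so I attach the analogous reachability certificate, this time treating \emph{all} successors of a vertex $v\in V_i$ as available and using $\vec\sigma(w\mid v)>0$ at the remaining vertices. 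The same rank argument then forces $r^i_v$ to be the least Bellman fixed point, which is the MDP optimal value and hence the correct best-response value.

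Finally I would observe that all of the above is a Boolean combination of polynomial (in)equalities --- the equations are only bilinear, coupling the strategy variables with the value variables, which is harmless since \ExThR{} admits arbitrary polynomials --- prefixed by a single block of existential quantifiers over polynomially many variables. Its truth is therefore decidable in \PSpace, and since the sentence is satisfiable iff the desired stationary Nash equilibrium exists, \StatNE is in \PSpace. The one place that genuinely requires care, and which I expect to be the crux of the write-up, is verifying that the reachability certificates exactly capture the least-solution conditions for both $z^i$ and $r^i$ while keeping the sentence purely existential; everything else is bookkeeping.
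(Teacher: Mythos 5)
Your overall strategy is the same as the paper's---encode the existence of a suitable stationary equilibrium as a polynomial-size sentence of \ExThR---but you arrive there by a genuinely different route. The paper uses $\PSpace=\NPSpace$ to \emph{guess} the support $S$ of the profile and computes the reachability sets $R_i$ combinatorially \emph{outside} the formula, then forces $z^i_v=0$ on $V\setminus R_i$; you instead internalise reachability via existentially quantified rank variables, obtaining a single deterministic polynomial-time reduction to \ExThR with no guessing. For the payoff variables $z^i_v$ your certificate is correct: ranks strictly decrease along witnessed support edges, so witnessed paths are simple and can only terminate in $F_i$, forcing $z^i_v=0$ exactly on the vertices that cannot reach $F_i$ in the support graph, after which the harmonic equations have a unique solution. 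This part is a sound (and arguably more self-contained) alternative to the paper's support-guessing.

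The gap is in the best-response part, at exactly the step you flag as the crux: the rank certificate does \emph{not} force $r^i$ to be the least Bellman fixed point. Spurious fixed points of a max-reachability MDP are not only caused by positive values on vertices that cannot reach $F_i$; cycles of player-$i$ vertices that \emph{can} reach $F_i$ sustain inflated values that no reachability certificate can detect. Concretely, let $u,u'\in V_i$ with $u\Delta=\{u',s\}$ and $u'\Delta=\{u,s\}$, where $s$ is a stochastic vertex moving with probability $\smallfrac{1}{2}$ to a terminal vertex in $F_i$ and with probability $\smallfrac{1}{2}$ to a losing terminal vertex. The true best-response values at $u,u',s$ are all $\smallfrac{1}{2}$, yet $r^i_u=r^i_{u'}=t$ for \emph{any} $t\in[\smallfrac{1}{2},1]$, together with $r^i_s=\smallfrac{1}{2}$, satisfies all your constraints: the inequalities hold, the disjunction holds because $u$ and $u'$ each ``copy'' the other's value, and the ranks exist (take $0$ on the $F_i$-terminal, $1$ at $s$, $2$ at $u$ and $u'$, with $s$ as the witness for both $u$ and $u'$). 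Consequently your claim that $z^i_{v_0}=r^i_{v_0}$ ``states exactly that no player can improve'' is unsupported, since $r^i_{v_0}$ need not be the best-response value. The construction can be repaired without changing the sentence, and the repair is precisely the mechanism the paper uses for its formula $\theta_i$: one does not need $r^i$ pinned down, only that (i) the true best-response vector is a solution of the constraints, and (ii) every nonnegative solution dominates it componentwise---which follows from monotonicity of the Bellman operator $T$, since $r=T(r)\geq 0$ gives $r=T^n(r)\geq T^n(0)$ and $T^n(0)$ converges to the value vector from below. Given (i) and (ii), some solution satisfies $r^i_{v_0}=z^i_{v_0}$ iff the best-response value at $v_0$ is at most $z^i_{v_0}$, i.e.\ iff player~$i$ cannot improve; inflated solutions only make the equality harder to satisfy, never easier. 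Once you argue this way, the $r$-certificate and the disjunctions become superfluous---the paper's $\theta_i$ keeps only the inequalities and imposes $r^i_{v_0}\leq z^i_{v_0}$. As written, however, your correctness proof has a real hole in the direction ``sentence satisfiable $\Rightarrow$ equilibrium exists''.
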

\begin{proof}
Instead of giving a deterministic polynomial-space algorithm for \StatNE, we
give a nondeterministic one. Since $\PSpace=\NPSpace$, this implies that
\StatNE is in \PSpace.
On input $\calG,v_0,\vec{x},\vec{y}$, the algorithm starts by guessing a set
$S\subseteq V\times V$ and proceeds by computing, for
each player~$i$, the set $R_i$ of vertices from where the set $F_i$ is
reachable in the graph $G=(V,S)$, a computation which can be carried out in
polynomial time.
Note that if $S$ is the \emph{support} of a stationary strategy
profile~$\vec{\sigma}$, \ie $S=\{(v,w)\in V\times V:\vec{\sigma}(w\mid v)>0\}$,
then $R_i$ is precisely the set of vertices~$v$ such that
$\Prob_v^{\vec{\sigma}}(\Reach(F_i))>0$.
Finally, the algorithm evaluates an existential first-order
sentence~$\psi$, which can be computed in polynomial time from $(\calG,v_0)$,
$\vec{x}$, $\vec{y}$, $S$ and $(R_i)_{i\in\Pi}$, over $\frakR$ and returns the
answer to this query.

It remains to describe a suitable sentence $\psi$.
Let $\vec{\alpha}=(\alpha_{vw})_{v,w\in V}$, $\vec{r}=(r^i_v)_{i\in\Pi,v\in V}$
and $\vec{z}=(z^i_v)_{i\in\Pi, v\in V}$ be three sets of variables, and let
$V_\ast=\bigcup_{i \in \Pi} V_i$ be the set of all non-stochastic vertices. The
formula
\begin{align*}
\phi(\vec{\alpha})&:=
\bigwedge_{v\in V_\ast} \bigg( \bigwedge_{w\in v\Delta} \alpha_{vw} \geq 0\
\wedge \bigwedge_{\mathmakebox[0.7cm][c]{w\in V\setminus v\Delta}}
\alpha_{vw}=0\wedge \sum_{w \in v\Delta} \alpha_{vw} = 1\bigg)\,\wedge \\
& \quad
\bigwedge_{\mathmakebox[0.7cm][c]{\substack{v\in V\setminus V_\ast \\ w\in V}}}
\alpha_{vw} = p_{vw}\wedge \bigwedge_{\mathmakebox[0.7cm][c]{(v,w)\in S}}
\alpha_{vw} > 0 \wedge
\bigwedge_{\mathmakebox[0.7cm][c]{(v,w)\not\in S}} \alpha_{vw}=0\,,
\end{align*}
where $p_{vw}$ is the unique number such that $(v,p_{vw},w)\in\Delta$,
states that the mapping $\vec{\sigma}:V\to\calD(V)$ defined by
$\vec{\sigma}(w\mid v)=\alpha_{vw}$
constitutes a valid stationary strategy profile of $\calG$ whose support
is~$S$. Provided that $\phi(\vec{\alpha})$ holds in $\frakR$, the formula
\[\eta_i(\vec{\alpha},\vec{z}):=
\bigwedge_{\mathclap{v \in F_i}} z^i_v= 1 \wedge
\bigwedge_{\mathmakebox[0.7cm][c]{v \in V\setminus R_i}} z^i_v= 0 \wedge
\bigwedge_{\mathmakebox[0.7cm][c]{v \in V\setminus F_i}} z^i_v =
\sum_{w \in v\Delta} \alpha_{vw} z^i_w\]
states that $z^i_v=\Prob_v^{\vec{\sigma}}(\Reach(F_i))$ for each $v\in V$,
where $\vec{\sigma}$ is defined as above. Again, this follows from a well-known
results about Markov chains, namely that the vector of the aforementioned
probabilities is the unique solution to the given system of equations. Finally,
the formula
\[\theta_i(\vec{\alpha},\vec{r}):=
\bigwedge_{v\in V} r^i_v\geq 0 \wedge
\bigwedge_{v\in F_i} r^i_v=1 \wedge
\bigwedge_{\substack{v\in V_i \\ w\in v\Delta}} r^i_v\geq r^i_w\wedge
\bigwedge_{\mathmakebox[0.7cm][c]{v\in V\setminus V_i}}
r^i_v=\sum_{w\in v\Delta}\alpha_{vw} r^i_w\]
states that $\vec{r}$ is a solution of the linear programme for computing
the maximal payoff that player~$i$ can achieve when playing against the
strategy profile $\vec{\sigma}_{-i}$. In particular, the formula is fulfilled
if $r^i_v=\sup_{\tau}\Prob^{(\vec{\sigma}_{-i},\tau)}_v(\Reach(F_i))$ (where
$\tau$ ranges over \emph{every} strategy of player~$i$), and
every other solution is greater than this one (in each component).

The desired sentence $\psi$ is the existential closure of the conjunction of
$\phi$ and, for each player~$i$, the formulae $\eta_i$ and $\theta_i$ combined
with formulae stating that player~$i$ cannot improve her payoff and that the
expected payoff for player~$i$ lies in between the given thresholds:
\[\psi:=\exists\vec{\alpha}\,\exists\vec{r}\,\exists\vec{z}\,
\big(\phi(\vec{\alpha})\wedge\bigwedge_{i\in\Pi}(\eta_i(\vec{\alpha},\vec{z})
\wedge\theta_i(\vec{\alpha},\vec{r})\wedge r^i_{v_0}\leq z^i_{v_0}\wedge
x_i\leq z^i_{v_0}\leq y_i)\big)\]
It follows that $\psi$ holds in $\frakR$ iff $(\calG,v_0)$ has a
stationary Nash equilibrium $\vec{\sigma}$ with payoff at least $\vec{x}$ and
at most $\vec{y}$ whose support is~$S$. Consequently, the algorithm is correct.
\end{proof}

\subsection{Lower bounds}

Having shown that \PosNE and \StatNE are in \NP and \PSpace respectively,
the natural question arises whether there is a polynomial-time algorithm
for \PosNE or \StatNE. The following theorem shows that this is not the case
(unless, of course, $\PTime=\NP$) since both problems are \NP-hard.
Moreover, both problems are already \NP-hard for games with only two players.

\begin{theorem}\label{thm:np-hardness}
\PosNE and \StatNE are \NP-hard, even for games with only two players.
\end{theorem}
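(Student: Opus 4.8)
The plan is to establish \NP-hardness by a polynomial-time many-one reduction from \SAT. Given a Boolean formula $\varphi$ in conjunctive normal form with variables $x_1,\dots,x_n$ and clauses $C_1,\dots,C_m$, I would construct in polynomial time a two-player SSMG $(\calG_\varphi,v_0)$, with players $0$ and $1$ only, such that $\varphi$ is satisfiable if and only if $(\calG_\varphi,v_0)$ admits a positional Nash equilibrium in which player~$0$ wins almost surely. I would moreover arrange the construction so that the very same equivalence holds for stationary equilibria, thereby proving \NP-hardness of \PosNE and \StatNE simultaneously. Since the construction uses no reward for player~$0$ other than almost-sure winning, the reduction in fact targets the threshold-free variant, so the lower bound transfers to it as claimed.

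The guiding idea is that player~$1$ \emph{encodes a truth assignment} and player~$0$ \emph{verifies} it. For each variable $x_j$ there is a single vertex controlled by player~$1$ whose two outgoing edges correspond to setting $x_j$ to true or to false; because a positional strategy makes exactly one choice per vertex (and a stationary one a single fixed distribution per vertex), using a \emph{single} vertex per variable is what forces the encoded assignment to be globally consistent. From $v_0$ the game moves, via a stochastic vertex, to one of the clauses $C_1,\dots,C_m$ chosen uniformly at random, whereupon a verification gadget for the selected clause is entered. The gadget is to be designed so that, in equilibrium, player~$0$ reaches a terminal in $F_0$ from a given clause precisely when that clause is satisfied by player~$1$'s assignment; hence player~$0$ wins almost surely iff every clause is satisfied, \ie iff $\varphi$ holds under the assignment. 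For the ($\Leftarrow$) direction I would let player~$1$ play a satisfying assignment and player~$0$ play to win, and check that this is an equilibrium (player~$0$ already has the maximal payoff~$1$, so only player~$1$'s incentives remain to be verified). For ($\Rightarrow$) I would argue contrapositively: if $\varphi$ is unsatisfiable then every assignment leaves some clause unsatisfied, and the gadget guarantees that from such a clause player~$0$ fails to reach $F_0$ with positive probability, so no equilibrium can have player~$0$ win almost surely.

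The crux — and the step I expect to be the main obstacle — is the design of the verification gadget together with the payoffs that turn it into an equilibrium constraint. The difficulty is intrinsic to \emph{memoryless} strategies: the gadget must compare player~$1$'s committed bit at a variable vertex against the \emph{polarity} of the literal under inspection, yet this polarity is determined by the history (which clause and literal led here) and is therefore invisible to a positional strategy at any successor of the variable vertex. To circumvent this I would not read the assignment by pure routing; instead I would have player~$1$ \emph{claim} a satisfying literal of the selected clause and use a randomized sub-game — in the spirit of the exit-to-$\frac{1}{2}$ gadgets of \cref{prop:stationary-not-enough} and \cref{prop:pure-not-enough} — to test this claim against player~$1$'s own assignment, calibrating the terminal payoffs so that a false (inconsistent) claim yields player~$1$ a profitable deviation while an honest one does not. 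A final subtlety is robustness for \StatNE: since stationary strategies may randomize, I must ensure that an equilibrium in which player~$0$ wins almost surely can arise only when the \emph{support} of player~$1$'s strategy still encodes a consistent satisfying assignment, \ie randomizing over conflicting literals must activate the testing gadget and cost player~$0$ the almost-sure win. Proving these two facts — profitability of deviation exactly on unsatisfied clauses, and soundness under randomization — is where the real work lies; the remaining ingredients (polynomial size, the two directions of the equivalence, and the linear-algebraic payoff computations already used in \cref{thm:pos-ne-np}) are routine.
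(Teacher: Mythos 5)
Your proposal has a genuine gap, and it sits exactly where you admit ``the real work lies'': the verification gadget is never constructed, and in this reduction the gadget \emph{is} the proof. Worse, the architecture you commit to --- one assignment vertex per variable, controlled by player~1, with consistency obtained ``for free'' from positionality --- runs into a concrete obstruction. Your $(\Rightarrow)$ direction requires that from an unsatisfied clause, player~0 fails to reach $F_0$ with positive probability. But under positional or stationary profiles the play after any fixed vertex is a Markov chain whose law depends only on that vertex, not on the history. So if a test of a claimed literal routes through the unique vertex $v_j$ for variable $x_j$, the conditional probability that player~0 wins after player~1 moves to the ``false'' successor of $v_j$ is a single number: it cannot simultaneously equal $1$ (when the claim was $\neg x_j$, consistent with the assignment) and be strictly less than $1$ (when the claim was $x_j$, inconsistent). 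Hence a routing-based test is impossible in your architecture; and if you instead split the test into per-literal or per-clause copies, positionality no longer ties the copies together and you lose exactly the consistency you invoked the single vertex to get. What remains is the incentive-based calibration you allude to, which you do not carry out.

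That calibration is precisely where the paper pays a price you claim to avoid. In the paper's construction the roles are inverted: player~0 encodes a \emph{pseudo}-assignment on separate vertices for $X_i$ and $\neg X_i$ (so consistency is \emph{not} free), player~1 picks literals at clause vertices and can ``leave'' at $\top$-vertices, and both consistency and satisfaction are enforced not structurally but through the \emph{exact} equilibrium payoff $(1,\frac{1}{2})$: the identity $\frac{1}{2}=\frac{p}{2^{n+1}}+\sum_{i=1}^n\frac{a_i}{2^{i+1}}$ forces $p=1$ and $a_i=1$ for all $i$, and the $\frac{2q}{1+q}>q$ argument eliminates randomization for \StatNE. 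Both thresholds are essential: in the paper's game, the profile in which player~0 declares \emph{every} literal true and player~1 never leaves is a Nash equilibrium where player~0 wins almost surely whether or not $\phi$ is satisfiable; only the upper bound $\frac{1}{2}$ on player~1's payoff rules it out. This is exactly why the remark after \cref{thm:np-hardness} adds a \emph{third} player to obtain hardness of the threshold-free variants, whereas you assert that your two-player reduction ``in fact targets the threshold-free variant.'' That assertion is unsupported, goes beyond what the paper establishes, and, given the obstruction above, you have offered no mechanism by which it could be made to work. As it stands, the proposal is a plan whose one nontrivial component is missing and whose claimed strengthening (two players \emph{and} no thresholds) conflicts with the structure of the known proof.
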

\begin{proof}
The proof is by reduction from \SAT. Let $\phi=C_1\wedge\dots\wedge C_m$ be
a formula in conjunctive normal form over propositional variables $X_1,\dots,
X_n$. Our aim is to construct a two-player SSMG $(\calG_\phi,v_0)$ such that
the following statements are equivalent:
\begin{enumerate}
 \item\label{item:reduction-1} $\phi$ is satisfiable;
 \item\label{item:reduction-2} $(\calG_\phi,v_0)$ has a positional Nash
  equilibrium with payoff $(1,\smallfrac{1}{2})$;
 \item\label{item:reduction-3} $(\calG_\phi,v_0)$ has a stationary Nash
  equilibrium with payoff $(1,\smallfrac{1}{2})$.
\end{enumerate}
Provided that the game can be constructed in polynomial time, the equivalence
of \ref{item:reduction-1}.\ and \ref{item:reduction-2}.\ establishes a
polynomial-time reduction from \SAT to \PosNE, whereas the equivalence of
\ref{item:reduction-1}.\ and \ref{item:reduction-3}.\ establishes one from \SAT
to \StatNE.
The game $\calG_\phi$ is depicted in \cref{fig:reduction-np} and played by
players 0 and 1. The game proceeds
from the initial vertex~$v_0$ to $X_i$ or $\vec{X_i}$ with
probability~$\frac{1}{2^{i+1}}$ each, and there is an edge from vertex $C_j$ to
vertex $X_i$ or $\vec{X_i}$ iff $X_i$ or $\neg X_i$ respectively
occurs in the clause $C_j$. Also, from $\top$-labelled vertices player~1 can
``leave the game'' by moving to a terminal vertex with
payoff~$(0,1)$. Obviously, the game $\calG_\phi$ can be constructed
from $\phi$ in polynomial time. It remains to show that
\ref{item:reduction-1}.--\ref{item:reduction-3}.\ are equivalent.
\begin{figure}[htb]
\centering
\begin{tikzpicture}[x=1.5cm,y=1.3cm,->,bend angle=15]
\tikzset{play/.style={circle,draw,minimum size=0.7cm}}
\tikzset{prob/.style={diamond,draw,minimum size=0.75cm}}

\node (init) at (-0.7,0) {};
\node (start) at (0,0) [prob] {$v_0$};
\node (phi) at (1,0) [prob] {$\phi$};
\node (end) at (0,3.5) {$(1,0)$};

\draw (init) to (start);
\draw (start) to node[above] {$\frac{1}{2^{n+1}}$} (phi);
\draw (start) to node[left] {$\frac{1}{2^{n+1}}$} (end);

\node (c1) at (1.7,2) [play,label={above:1}] {$C_1$};
\node (clauses1) at (1.7,1) {$\vdots$};
\node (clauses2) at (1.7,-1) {$\vdots$};
\node (cm) at (1.7,-2) [play,label={below:1}] {$C_m$};
\node (pay) at (5.5,0) {$(1,1)$};

\draw (phi) to node[right,near start] {$\frac{1}{m+1}$} (c1);
\draw (phi) to node[right,near start] {$\frac{1}{m+1}$} (cm);
\draw (phi) to node[above,very near start] {$\frac{1}{m+1}$} (pay);

\begin{scope}[yshift=1.3cm]
\node (x1) at (3,2.75) [play,label={above:0}] {$X_1$};
\node (x1z) at (3.6,3.55) [play,label={above:1}] {$\top$};
\node (x1b) at (4.5,3) [prob] {};
\node (x1bf) at (4.5,2.25) [prob] {$\bot$};
\node (pay1z) at (5.5,3.75) {$(0,1)$};
\node (pay1) at (5.5,3) {$(1,1)$};
\node (pay1f) at (5.5,2.25) {$(1,0)$};

\node (nx1) at (3,1) [play,label={above:0}] {$\vec{X_1}$};
\node (nx1b) at (4.5,0.75) [prob] {};
\node (nx1bf) at (4.5,1.5) [prob] {$\bot$};
\node (nx1z) at (3.6,0.2) [play,label={below:1}] {$\top$};
\node (npay1) at (5.5,0.75) {$(1,1)$};
\node (npay1f) at (5.5,1.5) {$(1,0)$};
\node (npay1z) at (5.5,0) {$(0,1)$};

\draw (x1) to [bend left=10] (x1z);
\draw (x1z) to (pay1z);
\draw (x1z) [bend left=10] to (x1b);
\draw (x1) to [bend left] (x1bf);
\draw (x1b) to [bend right=30] node[below,near start] {$\frac{1}{2}$} (x1);
\draw (x1bf) to [bend left] node[below] {$\frac{1}{2}$} (x1);
\draw (x1b) to node[above] {$\frac{1}{2}$} (pay1);
\draw (x1bf) to node[above] {$\frac{1}{2}$} (pay1f);

\draw (nx1) to [bend right=10] (nx1z);
\draw (nx1z) to [bend right=10] (nx1b);
\draw (nx1z) to (npay1z);
\draw (nx1) to [bend right] (nx1bf);
\draw (nx1b) to [bend left=30] node[above,near start] {$\frac{1}{2}$} (nx1);
\draw (nx1bf) to [bend right] node[above] {$\frac{1}{2}$} (nx1);
\draw (nx1b) to node[above] {$\frac{1}{2}$} (npay1);
\draw (nx1bf) to node[above] {$\frac{1}{2}$} (npay1f);
\end{scope}

\draw (c1) to [dashed] (x1); \draw (c1) to [dashed] (nx1);
\draw (cm) to [dashed] (x1); \draw (cm) to [dashed] (nx1);

\node (vars1) at (3,0.6) {$\vdots$};
\node (vars2) at (3,-0.6) {$\vdots$};
\node (nodes3) at (4.5,0.6) {$\vdots$};
\node (nodes4) at (4.5,-0.6) {$\vdots$};

\begin{scope}[yshift=-6.175cm]
\node (x2) at (3,2.75) [play,label={below:0}] {$X_n$};
\node (x2z) at (3.6,3.55) [play,label={above:1}] {$\top$};
\node (x2b) at (4.5,3) [prob] {};
\node (x2bf) at (4.5,2.25) [prob] {$\bot$};
\node (pay2z) at (5.5,3.75) {$(0,1)$};
\node (pay2) at (5.5,3) {$(1,1)$};
\node (pay2f) at (5.5,2.25) {$(1,0)$};

\node (nx2) at (3,1) [play,label={below:0}] {$\vec{X_n}$};
\node (nx2b) at (4.5,0.75) [prob] {};
\node (nx2bf) at (4.5,1.5) [prob] {$\bot$};
\node (nx2z) at (3.6,0.2) [play,label={below:1}] {$\top$};
\node (npay2) at (5.5,0.75) {$(1,1)$};
\node (npay2f) at (5.5,1.5) {$(1,0)$};
\node (npay2z) at (5.5,0) {$(0,1)$};

\draw (x2) to [bend left=10] (x2z);
\draw (x2z) to (pay2z);
\draw (x2z) [bend left=10] to (x2b);
\draw (x2) to [bend left] (x2bf);
\draw (x2b) to [bend right=30] node[below,near start] {$\frac{1}{2}$} (x2);
\draw (x2bf) to [bend left] node[below] {$\frac{1}{2}$} (x2);
\draw (x2b) to node[above] {$\frac{1}{2}$} (pay2);
\draw (x2bf) to node[above] {$\frac{1}{2}$} (pay2f);

\draw (nx2) to [bend right=10] (nx2z);
\draw (nx2z) to [bend right=10] (nx2b);
\draw (nx2z) to (npay2z);
\draw (nx2) to [bend right] (nx2bf);
\draw (nx2b) to [bend left=30] node[above,near start] {$\frac{1}{2}$} (nx2);
\draw (nx2bf) to [bend right] node[above] {$\frac{1}{2}$} (nx2);
\draw (nx2b) to node[above] {$\frac{1}{2}$} (npay2);
\draw (nx2bf) to node[above] {$\frac{1}{2}$} (npay2f);
\end{scope}

\draw (c1) to [dashed] (x2); \draw (c1) to [dashed] (nx2);
\draw (cm) to [dashed] (x2); \draw (cm) to [dashed] (nx2);

\draw (start) to [bend left=30] node[left,near start] {$\frac{1}{4}$} (x1);
\draw (start) to [bend left=15] node[above,near start] {$\frac{1}{4}$} (nx1);
\draw (start) to [bend right=15] node[below,near start] {$\frac{1}{2^{n+1}}$} (x2);
\draw (start) to [bend right=30] node[left,near start] {$\frac{1}{2^{n+1}}$} (nx2);
\end{tikzpicture}
\caption{\label{fig:reduction-np} Reducing \SAT to \PosNE and \StatNE.}
\end{figure}

$(1.\Rightarrow 2.)$ Assume that $\alpha:\{X_1,\dots,X_n\}\to\{\true,\false\}$
is a satisfying assignment of $\phi$. In the positional Nash
equilibrium of $(\calG,v_0)$, player~0 moves from a literal $L$ (\ie $L=X_i$
or $L=\vec{X_i}$ for some $i=1,\dots,n$) to the $\top$-labelled vertex iff $L$
is mapped to $\true$ by $\alpha$, and player~1 moves from vertex $C_j$ to a
(fixed) literal~$L$ that is contained in $C_j$ and mapped to $\true$ by
$\alpha$ (which is possible since $\alpha$ is a satisfying assignment). At
$\top$-labelled vertices, player~1 never leaves the game. Obviously, player~0
wins almost surely with this strategy profile. For player~1, the payoff is
\[\frac{1}{2^{n+1}}+\sum_{i=1}^n \frac{1}{2^{i+1}}
=\frac{1}{2^{n+1}}+\frac{1}{2}\big(\sum_{i=1}^n \frac{1}{2^i}\big)
=\frac{1}{2^{n+1}}+\frac{1}{2}\big(1-\frac{1}{2^n}\big)
=\frac{1}{2}\,,\]
where the first summand is the probability of going from the initial vertex
to $\phi$, from where player~1 wins almost surely since from every clause
vertex she plays to a ``true'' literal. Obviously, changing her strategy cannot
give her a better payoff. Therefore, we have a Nash equilibrium.

$(2.\Rightarrow 3.)$ Obvious.

$(3.\Rightarrow 1.)$ Let $\vec{\sigma}=(\sigma_0,\sigma_1)$ be a stationary
Nash equilibrium of $(\calG_\phi,v_0)$ with payoff $(1,\smallfrac{1}{2})$. Our
first aim is to show that $\sigma_0$ is actually a positional strategy. Towards
a contradiction, assume that there exists a literal~$L$ such that $\sigma_0(L)$
assigns probability $0<q<1$ to the neighbouring $\top$-labelled vertex. Since
player~0 wins almost surely, player~1 never leaves the game. Hence, the
expected payoff for player~1 from vertex~$L$ (\ie in the game $(\calG_\phi,L)$)
is precisely~$q$. However, if she left the game at the $\top$-labelled vertex,
she would receive payoff~$\frac{2q}{1+q}>q$. Therefore, $\vec{\sigma}$ is not a
Nash equilibrium, a contradiction.

Knowing that $\sigma_0$ is a positional strategy, we can define a \emph{pseudo
assignment} $\alpha:\{X_1,\neg X_1,\dots,X_n,\neg X_n\}\to
\{\true,\false\}$ by setting $\alpha(L)=\true$ if $\sigma_1$ prescribes
to go from vertex~$L$ to the neighbouring $\top$-labelled vertex. Our next
aim is to show that $\alpha$ is actually an assignment: $\alpha(X_i)=\true
\Leftrightarrow \alpha(\neg X_i)=\false$. To see this, note that we can compute
player~1's expected payoff as follows:
\[\frac{1}{2}=\frac{p}{2^{n+1}}+\sum_{i=1}^n \frac{a_i}{2^{i+1}},\quad a_i=
\begin{cases}
 0 & \text{if $\alpha(X_i)=\alpha(\neg X_i)=\false$,} \\
 1 & \text{if $\alpha(X_i)\not=\alpha(\neg X_i)$,} \\
 2 & \text{if $\alpha(X_i)=\alpha(\neg X_i)=\true$,}
\end{cases}\]
where $p$ is the expected payoff for player~1 from vertex~$\phi$. By the
construction of $\calG_\phi$, we have $p>0$, and the equality only holds
if $p=1$ and $a_i=1$ for all $i=1,\dots,n$, which proves that
$\alpha$ is an assignment.

Finally, we claim that $\alpha$ is a satisfying assignment. If this were not
the case, there would exist a clause~$C$ such that player~1's expected payoff
from vertex~$C$ is 0 and therefore $p<1$, where $p$ is defined as above. This
is a contradiction to the fact that $p=1$, as we have shown above.
\end{proof}

\begin{remark*}
The reduction in the proof of \cref{thm:np-hardness} can be modified
to demonstrate \NP-hardness of the threshold-free variants of \PosNE and
\StatNE, albeit at the expense of adding one more player to the game.
\end{remark*}

It follows from \cref{thm:pos-ne-np,thm:np-hardness} that \PosNE is
\NP-complete. For \StatNE, we have provided
an \NP lower bound and a \PSpace upper bound, but the exact complexity of the
problem remains unclear. Towards gaining more insight into the problem
\StatNE, we relate its complexity to the complexity of the
\emph{Square Root Sum Problem} (\SqrtSum), the problem of deciding, given
numbers $d_1,\dots,d_n,k\in\bbN$, whether $\sum_{i=1}^n \sqrt{d_i}\geq k$.
Recently, it was shown that \SqrtSum belongs to the 4th level of the
\emph{counting hierarchy} \cite{AllenderBKM06}, which is a slight improvement
over the previously known \PSpace upper bound.
However, it is an open question since the 1970s whether \SqrtSum falls into
the polynomial hierarchy \cite{GareyGJ76,EtessamiY07}. We identify a
polynomial-time reduction from \SqrtSum to \StatNE.\footnote{Some authors
define \SqrtSum with $\leq$ instead of $\geq$. With this definition, we would
reduce from the complement of \SqrtSum instead.}
Hence, \StatNE is at least as hard as \SqrtSum, and showing that
\StatNE resides inside the polynomial hierarchy would imply a major
breakthrough in understanding the complexity of numerical computation.

\begin{theorem}
$\SqrtSum$ is polynomial-time reducible to \StatNE.
\end{theorem}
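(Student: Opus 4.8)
The plan is to reduce \SqrtSum to \StatNE by constructing, from an instance $d_1,\dots,d_n,k$, an SSMG $(\calG,v_0)$ together with rational thresholds such that a stationary Nash equilibrium with payoff in the prescribed interval exists if and only if $\sum_i\sqrt{d_i}\ge k$. The guiding phenomenon is the one already visible in \cref{prop:pure-not-enough}: stationary (mixed) strategies can realise equilibrium values that pure or positional strategies cannot. Whereas the linear programmes in the proof of \cref{thm:pos-ne-np} show that positional equilibria always have \emph{rational} payoffs, a carefully designed gadget can force the mixing probability of a player in \emph{every} stationary equilibrium to be an algebraic irrational, and in particular to equal $\sqrt{d_i}$ (suitably normalised). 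Aggregating these values by an initial randomisation and comparing the result against a rational threshold then encodes the \SqrtSum inequality.

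Concretely, I would first normalise: put $M:=\max_i d_i$ (assume $M\ge 1$), so that each $\sqrt{d_i}/M\in[0,1]$ and the quantity to be compared against the rational number $k/(nM)$ is $\tfrac1{nM}\sum_i\sqrt{d_i}$. The core of the construction is a \emph{square-root gadget} $\calG_i$ in which the best-response condition of a mixing player yields a \emph{quadratic} equation $q^2=d_i/M^2$ rather than a linear one. To obtain a genuine quadratic, and not merely a rational function of a single mixing probability of the kind produced by the self-loops in \cref{fig:reduction-np} (where the detour value worked out to $\tfrac{2q}{1+q}$), I would make the decisive payoff depend on the \emph{product} of two mixing probabilities: either two symmetric auxiliary players whose indifference conditions force them to mix identically, or one player together with a stochastic vertex whose fixed branching encodes $d_i$. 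The gadgets are then combined by letting $v_0$ move to $\calG_i$ with probability $\tfrac1n$ each, so that, in the intended equilibrium, player~0's payoff equals $\tfrac1{nM}\sum_i\sqrt{d_i}$; the thresholds are $x_0=k/(nM)$ and $y_0=1$, with the auxiliary players left unconstrained.

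I would then prove the two directions. The easy direction ($\sum_i\sqrt{d_i}\ge k\Rightarrow$ equilibrium exists) is handled by exhibiting the stationary profile in which each gadget is played at its intended probability $\sqrt{d_i}/M$ and verifying, gadget by gadget, that no player can improve — a routine best-response computation. The hard direction, and the main obstacle, is the converse: showing that in \emph{every} stationary Nash equilibrium the gadget $\calG_i$ is forced to realise exactly the value $\sqrt{d_i}/M$, so that player~0's payoff is pinned to $\tfrac1{nM}\sum_i\sqrt{d_i}$ and the threshold test becomes equivalent to the \SqrtSum inequality. This step is analogous to the $(3.\Rightarrow1.)$ argument in \cref{thm:np-hardness} and requires ruling out degenerate equilibria: one must show that the mixing players are genuinely indifferent (so that $0<q<1$ and the quadratic, rather than a boundary value, governs the outcome), that the two probabilities whose product appears are forced to coincide, and that the relevant root of $q^2=d_i/M^2$ lies in $(0,1)$ and is the unique value compatible with the best-response conditions.

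Finally, I would verify that the construction is polynomial: $M$ and the thresholds $k/(nM)$ have polynomially many bits, and each gadget contributes a constant number of vertices with transition probabilities of polynomial bit-length, so $\calG$ has size linear in $n$. Combined with the uniqueness argument, this yields that $d_1,\dots,d_n,k$ is a positive instance of \SqrtSum if and only if $(\calG,v_0)$ admits a stationary Nash equilibrium with payoff at least the threshold, establishing the desired polynomial-time reduction.
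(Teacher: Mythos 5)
Your high-level mechanism is the same one the paper uses: in a stationary profile, a loop makes reachability probabilities depend on \emph{products} of mixing probabilities, so best-response constraints become quadratic, their solutions are square roots, and an initial lottery aggregates these into $\sum_i\sqrt{d_i}$ for comparison against a rational threshold. But there are two genuine gaps. First, the gadget itself --- which is the entire technical content of the theorem --- is only asserted to exist (``a carefully designed gadget can force the mixing probability \dots to equal $\sqrt{d_i}$''); you never construct it or derive the quadratic from concrete best-response conditions. The paper's gadget $\calG(p)$ in \cref{fig:sqrt-reduction}\,(b) does this with a single player (player~0) mixing at two vertices $s_1,s_2$ of a loop containing probability-$p$ stochastic edges: the probability of eventually leaving the loop at $s_1$ is $\frac{p(1-x_1)}{1-x_1x_2p^2}$, the participation constraints of two further players require both such exit probabilities to be at least $\smallfrac{1}{2}$, and maximising a fourth player's payoff $\frac{1-p}{1-x_1x_2p^2}$ over this feasible region forces $x_1=x_2$ and tightness of the constraint $p^2x^2-2px+2p-1\geq 0$, giving the value $\frac{\sqrt{2-2p}-p+1}{2p+2}$; the clean $\sqrt{d_i}$ is then extracted by choosing $p_i=1-\frac{d_i}{2d^2}$, and the branching probabilities at $v_1$ are tuned to cancel the extraneous factors. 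None of this delicate work appears in your plan, and it is not routine.

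Second, and more fundamentally, your hard direction aims at a statement that is false for any gadget of this kind: you want \emph{every} stationary Nash equilibrium to realise exactly $\sqrt{d_i}/M$, and you propose to ``rule out degenerate equilibria''. You cannot: whenever a player weighs an outside option against staying in the gadget, the profile in which she (or the mixing player) takes a boundary action is also a stationary equilibrium with a different, lower payoff --- in the paper's gadget, $x_1=0$ already yields such an equilibrium. The paper never pins the value; it characterises the \emph{maximum} payoff of a designated player over all stationary equilibria and checks that degenerate equilibria fall below it. That weaker property suffices because the decision problem is existential: an equilibrium with payoff at least the threshold exists iff this maximum reaches the threshold (the paper even internalises the comparison by giving the designated player the rational outside option $\frac{2k+1}{8dn}$ at $v_0$, so that the reduction also works for the threshold-free variant, asking whether player~0 wins almost surely in some equilibrium). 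Your architecture survives if ``pinned in every equilibrium'' is replaced by ``bounded above by, and achieved at, the target value'', but as written your plan commits you to an impossible uniqueness claim, and it is exactly at this point that the real proof has to do something different.
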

\begin{proof}
Given an instance $(d_1,\dots,d_n,k)$ of \SqrtSum, we construct an
SSMG~$(\calG,v_0)$ played by players $0,1,2,3$ (with
payoffs given in this order) such that $\sum_{i=1}^n
\sqrt{d_i}\geq k$ iff $(\calG,v_0)$ has a stationary Nash
equilibrium where player~0 wins almost surely.

In order to state our reduction, let us first examine the game $\calG(p)$,
where $p\in[\smallfrac{1}{2},1)$, which is depicted in
\cref{fig:sqrt-reduction}\,(b).
\begin{figure}
\centering
\begin{tikzpicture}[x=1.4cm,y=1.3cm,->,bend angle=15]
\begin{scope}
\node (init) at (-1.7,0) {};
\node (0) at (-1,0) [play,label={above:3}] {$v_0$};
\node (t) at (-1,-1) {$(0,0,0,\frac{2k+1}{8dn})$};
\node (1) at (0,0) [prob] {$v_1$};
\node (2) at (0,1.5) {$(1,0,0,0)$};
\node (3) at (55:1.5) {$\calG(p_1)$};
\node (4) at (-55:1.5) {$\calG(p_n)$};
\node (dots) at (0.75,0.1) {$\vdots$};

\draw (init) to (0);
\draw (0) to (1); \draw (0) to (t);
\draw (1) to node[left] {$\frac{1}{4dn}$} (2);
\draw (1) to node[right] {$\frac{4d^2-d_1}{4d^2-n}$} (3);
\draw (1) to node[right] {$\frac{4d^2-d_n}{4d^2-n}$} (4);

\node (caption) at (0.15,-2.4) {(a)};
\end{scope}

\begin{scope}[xshift=3.9cm]
\node (label) at (-0.5,1.3) {$\calG(p)$:};
\node (0) at (-0.7,0) {};
\node (1) at (0,0) [play,label=above left:{1}] {$s$};
\node (1a) at (1.1,0) {$(1,\smallfrac{1}{2},0,0)$};
\node (2) at (0.5,1) [prob] {};
\node (2a) at (0.3,2) {$(1,0,0,1)$};
\node (3) at (1.7,1) [play,label=right:{0}] {$s_1$};
\node (3a) at (1.9,2) {$(1,1,0,0)$};
\node (4) at (2.2,0) [play,label=above right:{2}] {};
\node (4a) at (3.3,0) {$(1,0,\smallfrac{1}{2},0)$};
\node (5) at (1.7,-1) [prob] {};
\node (5a) at (1.9,-2) {$(1,0,0,0)$};
\node (6) at (0.5,-1) [play,label=left:{0}] {$s_2$};
\node (6a) at (0.3,-2) {$(1,0,1,0)$};

\draw (0) to (1);
\draw (1) to [bend left] (2); \draw (1) to (1a);
\draw (2) to [bend left=20] node[above] {$p$} (3); \draw (2) to node[right] {$1-p$} (2a);
\draw (3) to [bend left] (4); \draw (3) to (3a);
\draw (4) to [bend left] (5); \draw (4) to (4a);
\draw (5) to [bend left=20] node[above] {$p$} (6); \draw (5) to node[right] {$1-p$} (5a);
\draw (6) to [bend left] (1); \draw (6) to (6a);

\node (caption) at (1.5,-2.4) {(b)};
\end{scope}
\end{tikzpicture}
\caption{\label{fig:sqrt-reduction} Reducing \SqrtSum to \StatNE.}
\end{figure}

\begin{claim}
The maximal payoff player~3 can receive in a stationary Nash equilibrium of
$(\calG(p),s)$ is $\frac{\sqrt{2-2p}-p+1}{2p+2}$.
\end{claim}
\begin{proof}
Let $\vec{\sigma}$ be any stationary strategy profile of $(\calG(p),s)$. We
denote by~$x_1$ and~$x_2$ the probabilities that
player~0 stays inside the gadget at vertex~$s_1$ and vertex~$s_2$
respectively. Consequently, the probabilities of \emph{eventually leaving}
the gadget at vertex~$s_1$ and vertex~$s_2$ are given by $p_1(x_1,x_2):=
\frac{p(1-x_1)}{1-x_1x_2p^2}$ and $p_2(x_1,x_2):=\frac{p(1-x_2)}{1-x_1x_2p^2}$
respectively.
Note that if $x_1=0$, then $\vec{\sigma}$ is a Nash equilibrium
where player~3 receives payoff $\leq 1-p\leq \frac{\sqrt{2-2p}-p+1}{2p+2}$.
Hence, let us assume that $x_1>0$ and look for a Nash equilibrium where
player~3 receives payoff~$>1-p$. For this, it must be the case that
$p_1(x_1,x_2),p_2(x_1,x_2)\geq\smallfrac{1}{2}$ since otherwise player~1 or
player~2
could improve her payoff by moving out of the gadget, where they would get
payoff \smallfrac{1}{2}\ immediately (and player~3 would receive
payoff~$\leq 1-p$).
Vice versa, if $p_1(x_1,x_2),p_2(x_1,x_2)\geq\smallfrac{1}{2}$ then
$\vec{\sigma}$ is obviously a Nash equilibrium. Hence, to determine the maximum
payoff for player~3 in a stationary Nash equilibrium, we have to
maximise $\frac{1-p}{1-x_1 x_2 p^2}$, the expected payoff for player~3, under
the constraints $p_1(x_1,x_2),p_2(x_1,x_2)\geq\smallfrac{1}{2}$ and
$0\leq x_1,x_2\leq
1$. We claim that the maximum is reached only if $x_1=x_2$; if, for example,
$x_1>x_2$ then we can achieve a higher payoff for player~3
by setting $x_2':=x_1$, and the constraints are still satisfied:
\[\frac{p(1-x_2')}{1-x_1x_2'p^2}=\frac{p(1-x_1)}{1-x_1x_2'p^2}
 =\frac{p(1-x_1)}{1-x_1^2p^2}\geq\frac{p(1-x_1)}{1-x_1 x_2p^2}
 \geq\frac{1}{2}\]
Hence, in fact, we have to maximise $\frac{1-p}{1-x^2p^2}$ under the
constraints $\frac{p(1-x)}{1-x^2p^2}\geq\frac{1}{2}$ and $0\leq x\leq 1$,
\ie under $p^2x^2-2px+2p-1\geq 0$ and $0\leq x\leq 1$. The roots of
$p^2x^2-2px+2p-1$ are $\frac{1\pm\sqrt{2-2p}}{p}$, but
$\frac{1+\sqrt{2-2p}}{p}$ is always greater than 1 for $p\in[0,1)$.
Hence, any solution must be less than $x:=\frac{1-\sqrt{2-2p}}{p}$. In
fact, we always have $0\leq x<1$ for $p\in(\smallfrac{1}{2},1)$. Therefore,
$x$ is the optimal solution, and the maximal payoff for player~3 is indeed
$\frac{1-p}{1-x^2p^2}=\frac{\sqrt{2-2p}-p+1}{2p+2}$.\noqed
\end{proof}

Finally, we can setup our reduction. Let $(d_1,\dots,d_n,k)$ be an instance of
\SqrtSum where, without loss of generality, $n>0$, $d_i>0$ for each
$i=1,\dots,n$, and $k\leq d:=\sum_{i=1}^n d_i$.
Define $p_i:= 1 - \frac{d_i}{2d^2}$ for $i=1,\dots,n$.
Note that $p_i\in [\smallfrac{1}{2},1)$ since $0<d_i\leq d\leq d^2$. For
the reduction, we use $n$ copies of the game $\calG(p)$, where in the $i$th
copy we set $p$ to $p_i$. The complete game $\calG$ is depicted in
\cref{fig:sqrt-reduction}\,(a); it can obviously be constructed in polynomial
time.

By the above claim, the maximal payoff player~3 can get in a stationary Nash
equilibrium of $(\calG(p_i),s)$ is
\[\frac{\sqrt{2-2p_i}-p_i+1}{2p_i+2}
=\frac{\frac{1}{d}\sqrt{d_i}-(1-\frac{d_i}{2d^2})+1}{4-\frac{d_i}{d^2}}
=\frac{d\sqrt{d_i}+\frac{d_i}{2}}{4d^2-d_i}\,.\]
Consequently, the maximal payoff player~3 can get in a stationary Nash
equilibrium of $(\calG,v_1)$ is
\[\sum_{i=1}^n \frac{4d^2-d_i}{4d^2n}\cdot
\frac{d\sqrt{d_i}+\frac{d_i}{2}}{4d^2-d_i}
=\sum_{i=1}^n\frac{\sqrt{d_i}}{4dn}+\sum_{i=1}^n \frac{d_i}{8d^2n}
=\sum_{i=1}^n\frac{\sqrt{d_i}}{4dn}+\frac{1}{8dn}\,.\]
Let us fix a stationary Nash equilibrium $\vec{\sigma}$ of $(\calG,v_1)$
with this payoff for player~3.

Now, if $\sum_{i=1}^n \sqrt{d_i}\geq k$, then also $\sum_{i=1}^n
\frac{\sqrt{d_i}}{4dn}+\frac{1}{8dn}\geq\frac{2k+1}{8dn}$, and $\vec{\sigma}$
can be extended to a stationary Nash equilibrium of $(\calG,v_0)$ where
player~0 wins almost surely by setting $\vec{\sigma}(v_1\mid v_0)=1$.
On the other hand, if $\sum_{i=1}^n\sqrt{d_i}<k$, then also $\sum_{i=1}^n
\frac{\sqrt{d_i}}{4dn}+\frac{1}{8dn}<\frac{2k+1}{8dn}$, and in every stationary
Nash equilibrium of $(\calG,v_0)$ player~3 leaves the game at $v_0$, which
gives payoff~0 to player~0.
\end{proof}

\section{Undecidable variants of \NE}\label{section:undecidability}

\subsection{Pure-strategy equilibria}

In this section, we show that the problem \PureNE is undecidable
by exhibiting a reduction from an undecidable problem about
\emph{two-counter machines}. Our construction is inspired by a
construction used by Br\'azdil \ea \cite{BrazdilBFK06} to prove the
undecidability of stochastic games with branching-time winning conditions.

A two-counter machine $\calM$ is given by a list of instructions $\iota_1,
\dots,\iota_m$ where each instruction is one of the following:
\begin{itemize}
 \item \Inc{j}{k}\quad (increment counter~$j$ by~1 and go to
   instruction number~$k$);
 \item \Test{j}{k}{l}\quad (if the value of
   counter~$j$ is zero, go to instruction number~$k$; otherwise, decrement
   counter~$j$ by one and go to instruction number~$l$);
 \item \Halt\quad (stop the computation).
\end{itemize}
Here $j$ ranges over $1,2$ (the two counters), and $k\not=l$ range over
$1,\dots,m$. A configuration of $\calM$ is a triple $C=(i,c_1,c_2)\in
\{1,\dots,m\}\times\bbN\times\bbN$, where~$i$ denotes the number of the
current instruction and $c_j$ denotes the current value of counter~$j$.
A configuration $C'$ is the \emph{successor} of configuration $C$, denoted by
$C\vdash C'$, if it results from $C$ by executing instruction $\iota_i$; a
configuration $C=(i,c_1,c_2)$ with $\iota_i=\Halt$ has no successor
configuration. Finally, the \emph{computation of $\calM$} is the unique maximal
sequence $\rho=\rho(0)\rho(1)\dots$ such that $\rho(0)\vdash\rho(1)\vdash\dots$
and $\rho(0)=(1,0,0)$ (the \emph{initial configuration}). Note that $\rho$ is
either infinite, or it ends in a configuration $C=(i,c_1,c_2)$ such that
$\iota_i=\Halt$.

The \emph{halting problem} is to decide, given a machine $\calM$,
whether the computation of $\calM$ is finite. It is well-known that
two-counter machines are Turing powerful, which makes the halting problem
and its dual, the \emph{non-halting problem}, undecidable.

\begin{theorem}\label{thm:undecidability}
\PureNE is undecidable.
\end{theorem}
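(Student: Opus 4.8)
The plan is to reduce the \emph{non-halting problem} for two-counter machines, which is undecidable, to \PureNE. Given a machine $\calM$, I would construct an SSMG $(\calG,v_0)$, computable from $\calM$, together with a distinguished player~0, so that $(\calG,v_0)$ has a pure Nash equilibrium in which player~0 wins almost surely \emph{if and only if} the computation of $\calM$ is infinite. Undecidability of \PureNE then follows immediately.

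The core of the construction is a \emph{simulation track} on which the game replays the run of $\calM$. The current instruction number is stored in the current vertex---there is one gadget per instruction $\iota_i$---whereas the two counter values $c_1,c_2$ are encoded by \emph{probabilities}: the gadgets are arranged so that, from the vertex representing a configuration $(i,c_1,c_2)$, certain ``counter terminals'' are reached with probabilities proportional to $2^{-c_1}$ and $2^{-c_2}$. An increment gadget inserts a fresh fair coin flip that halves the relevant probability, a decrement gadget doubles it, and a zero-test gadget lets the simulator branch to instruction~$k$ or~$l$. Crucially, at every step the play escapes the track, with a fixed positive probability, to a terminal that is winning for player~0; hence along any \emph{infinite} run player~0 wins almost surely, while a run that reaches a \Halt instruction after finitely many steps reaches a player-0-losing terminal with positive probability. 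This already aligns ``player~0 wins almost surely'' with ``the run is infinite''---\emph{provided} the run is a faithful simulation.

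Faithfulness is enforced through the equilibrium condition by introducing \emph{verifier} players. At a zero-test (and, analogously, at each counter update) a verifier may abandon the track and ``cash in'' a payoff equal to the encoded counter probability, to be compared against a fixed reference payoff built into the gadget. The payoffs are calibrated so that this deviation is unprofitable exactly when the branch taken is consistent with the true counter value, but strictly profitable when the simulator cheats---claiming that a counter is zero when its encoded probability is below~$1$, or vice versa. Consequently, in any profile where player~0 wins almost surely the simulation must be faithful, for otherwise some verifier has a profitable deviation (or, keeping its punishing off-path strategy fixed, diverts a cheating play to a player-0-losing terminal with positive probability). Conversely, a faithful simulation of a non-halting computation, together with verifiers poised to punish off-path, is a pure Nash equilibrium in which player~0 wins almost surely, because on the faithful path every verifier weakly prefers to continue.

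I expect the main obstacle to be the simultaneous calibration of all coin-flip probabilities and terminal payoffs so that the single \emph{finite} graph tracks the \emph{unbounded} counters exactly through the probability encoding, while the verifier's indifference is \emph{tight} on faithful play yet \emph{strictly} violated by every cheating move---and to achieve this using only \emph{pure} strategies, where a player cannot smooth out incentives by randomising. The zero-test gadget is the delicate case, since it must distinguish encoded probability $=1$ from $<1$ robustly. I would also expect the harder direction of the correctness proof to be the ``only if'' part: showing that when $\calM$ halts there is genuinely \emph{no} pure Nash equilibrium with player~0 winning almost surely, which requires ruling out equilibria that cheat in order to avoid ever reaching the \Halt instruction.
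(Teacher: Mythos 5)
Your overall plan coincides with the paper's: reduce from the non-halting problem, simulate $\calM$ on a track of per-instruction gadgets, encode the counters via probabilities, and use auxiliary players whose equilibrium conditions force a faithful simulation, so that player~0 wins almost surely iff the run is infinite. But the core mechanism you sketch does not work as stated, and the part you defer as ``calibration'' is precisely where the proof lives. First, a finite graph cannot hold unbounded counter values, and neither can a ``running'' probability: the probability of the history played so far is non-increasing along a play, so while an increment can ``insert a fresh coin flip that halves the relevant probability,'' no gadget can \emph{double} it back for a decrement. If instead you mean the \emph{conditional} probability of reaching certain terminals from the current position, then that quantity is determined by the downstream behaviour of the strategies, not by the gadget itself---so the counter must actually be stored in the unbounded memory of a \emph{pure strategy}. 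This is the paper's key device: in each counter gadget $C_{j,\gamma}^t$, player~0 \emph{re-encodes} the current value $c_j$ from scratch by choosing to traverse a loop through a stochastic vertex exactly $c_j$ times; the encoding is created afresh at every simulation step, and the entire difficulty becomes linking consecutive encodings.

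Second, your enforcement mechanism is one-sided, and one-sided deviations only yield inequalities. A single verifier who can ``cash in'' a fixed reference payoff forces, in equilibrium, that the encoded probability is at least (or at most) that reference---never that it is \emph{exactly} right, which is what a correct counter update requires (claiming $c+2$ instead of $c+1$ after an increment must be punished just as much as claiming $c$). The paper pins exact equalities by using \emph{pairs} of players $A_j^t,B_j^t$ whose payoffs are complementary inside the relevant portion of the play (they satisfy $a_j^n+b_j^n=\smallfrac{1}{2}$), so that $A_j^t$'s best-response condition forces $a_j^n\geq\smallfrac{1}{3}$ while $B_j^t$'s forces $a_j^n\leq\smallfrac{1}{3}$; then the identity $a_j^n=p_j^n+\smallfrac{1}{4}a_j^{n+2}$, together with a boundedness argument on the resulting recurrence $a_j^{n+2}=4a_j^n-1$, makes the per-step equalities $a_j^n=\smallfrac{1}{3}$ equivalent to the local conditions $p_j^n=\smallfrac{1}{4}$, and a binary-expansion computation shows that $p_j^n=\smallfrac{1}{4}$ holds iff the counter update \cref{equation:counter-update} is correct. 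This also disposes of the zero test you worry about: the gadget $C_{j,\zero(j)}^t$ simply has no loop edge, so taking the zero branch is consistent with the exact-probability condition only if the previous encoding was indeed~$0$. Without the re-encoding idea and the two-sided pinning of exact probabilities, your reduction cannot be completed; the obstacle you yourself flag as the ``main obstacle'' is not a technicality to be expected but the actual content of the theorem.
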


In order to prove \cref{thm:undecidability}, we show that one can compute from
a two-counter machine~$\calM$ an SSMG~$(\calG,v_0)$ with nine players such that
the computation of $\calM$ is infinite iff $(\calG,v_0)$ has a pure Nash
equilibrium where player~0 wins almost surely. This establishes a reduction
from the non-halting problem to \PureNE.

The game~$\calG$ is played player~0 and eight other
players $A_j^t$ and $B_j^t$, indexed by $j\in\{1,2\}$ and $t\in
\{0,1\}$. Let $\Gamma=\{\init,\inc(j),\dec(j),\zero(j):j=1,2\}$. If $\calM$
has instructions $\iota_1,\dots,\iota_m$, then for each
$i\in\{1,\dots,m\}$, each $\gamma\in\Gamma$, each $j\in\{1,2\}$
and each $t\in\{0,1\}$, the game~$\calG$ contains the gadgets
$S_{i,\gamma}^t$, $I_{i,\gamma}^t$ and $C_{j,\gamma}^t$, which are depicted in
\cref{fig:two-counter-simulation}. In the figure, squares represent terminal
vertices (the edge leading from a terminal vertex to itself being
implicit), and the labelling indicates which players win at the respective
vertex. Moreover, the dashed edge inside $C_{j,\gamma}^t$ is present iff
$\gamma\not\in\{\init,\zero(j)\}$. The initial vertex~$v_0$ of $\calG$ is the
black vertex inside the gadget~$S_{1,\init}^0$.
\begin{figure}
\centering
\begin{tikzpicture}[x=1.2cm,y=1.1cm,->]
\node (caption) at (-1,1) [anchor=west] {$S_{i,\gamma}^t$:};

\begin{scope} 
\node (1) at (0,0) [play,fill=black,label=left:$A_1^t$] {};
\node (1a) at (0.7,0)
[coordinate,label=right:{$(0,\smallfrac{1}{3},\dots,\smallfrac{1}{3})$}] {};
\node (2) at (0,-1) [play,label=left:$B_1^t$] {};
\node (2a) at (0.7,-1)
[coordinate,label=right:{$(0,\smallfrac{1}{6},\dots,\smallfrac{1}{6})$}] {};
\node (3) at (0,-2) [play,label=left:$A_2^t$] {};
\node (3a) at (0.7,-2)
[coordinate,label=right:{$(0,\smallfrac{1}{3},\dots,\smallfrac{1}{3})$}] {};
\node (4) at (0,-3) [play,label=left:$B_2^t$] {};
\node (4a) at (0.7,-3)
[coordinate,label=right:{$(0,\smallfrac{1}{6},\dots,\smallfrac{1}{6})$}] {};
\node (5) at (0,-4.5) [prob] {};
\node (6) at (0.7,-4) [coordinate,label=right:{$C_{1,\gamma}^t$}] {};
\node (7) at (0.7,-5) [coordinate,label=right:{$C_{2,\gamma}^t$}] {};
\node (8) at (0,-5.6) [coordinate,label=below:{$I_{i,\gamma}^t$}] {};

\draw (0,0.8) to (1);
\draw (1) to (2); \draw (1) to (1a);
\draw (2) to (3); \draw (2) to (2a);
\draw (3) to (4); \draw (3) to (3a);
\draw (4) to (5); \draw (4) to (4a);
\draw (5) to node[left] {\smallfrac{1}{2}} (8);
\draw (5) to node[above] {\smallfrac{1}{4}} (6);
\draw (5) to node[below] {\smallfrac{1}{4}} (7);
\end{scope}

\node (caption) at (3,0.6) [anchor=west] {$C_{j,\gamma}^t$:};

\begin{scope}[xshift=4cm,yshift=-2.64cm] 
\node (2) at (1,0) [play,label={below:0}] {};
\node (2a) at (1,1) [prob,fill=black!20] {};
\node (2b) at (1,2) [end,label={above:$0,A_j^t,B_j^{\vec{t}}$}] {};
\node (3) at (2,0) [prob] {};
\node (3a) at (2,1) [end,label={above:$0,A_j^t,A_j^{\vec{t}}$}] {};
\node (4) at (3,0) [prob] {};
\node (4a) at (3,1) [end,label={above:$0,A_j^t,B_j^{\vec{t}}$}] {};
\node (5) at (4,0) [prob] {};
\node (5a) at (4,1) [end,label={above:$0,A_j^t,B_j^{\vec{t}}$}] {};
\node (6) at (5,0) [end,label={above:$0,B_j^t,B_j^{\vec{t}}$}] {};
\node (label) at (0,-1) [anchor=west] {if $\gamma=\inc(j)$;};

\draw (0.3,0) to (2);
\draw (2) to (3); \draw (2) to [bend right] (2a);
\draw (2a) to (2b); \draw (2a) to [bend right] (2);
\draw (3) to (4); \draw (3) to (3a);
\draw (4) to (5); \draw (4) to (4a);
\draw (5) to (6); \draw (5) to (5a);
\end{scope}

\begin{scope}[xshift=4cm,yshift=-7.24cm] 
\node (2) at (1,0) [play,label={below:0}] {};
\node (2a) at (1,1) [prob,fill=black!20] {};
\node (2b) at (1,2) [end,label={above:$0,A_j^t,B_j^{\vec{t}}$}] {};
\node (3) at (2,0) [prob] {};
\node (3a) at (2,1) [end,label={above:$0,A_j^t,B_j^{\vec{t}}$}] {};
\node (4) at (3,0) [prob] {};
\node (4a) at (3,1) [end,label={above:$0,A_j^t,B_j^{\vec{t}}$}] {};
\node (5) at (4,0) [prob] {};
\node (5a) at (4,1) [end,label={above:$0,A_j^t,A_j^{\vec{t}}$}] {};
\node (6) at (5,0) [end,label={above:$0,B_j^t,B_j^{\vec{t}}$}] {};
\node (label) at (0,-1) [anchor=west] {if $\gamma=\dec(j)$;};

\draw (0.3,0) to (2);
\draw (2) to (3); \draw (2) to [bend right] (2a);
\draw (2a) to (2b); \draw (2a) to [bend right] (2);
\draw (3) to (4); \draw (3) to (3a);
\draw (4) to (5); \draw (4) to (4a);
\draw (5) to (6); \draw (5) to (5a);
\end{scope}

\begin{scope}[xshift=4cm,yshift=-11.84cm] 
\node (2) at (1,0) [play,label={below:0}] {};
\node (2a) at (1,1) [prob,fill=black!20] {};
\node (2b) at (1,2) [end,label={above:$0,A_j^t,B_j^{\vec{t}}$}] {};
\node (3) at (2,0) [prob] {};
\node (3a) at (2,1) [end,label={above:$0,A_j^t,B_j^{\vec{t}}$}] {};
\node (4) at (3,0) [prob] {};
\node (4a) at (3,1) [end,label={above:$0,A_j^t,A_j^{\vec{t}}$}] {};
\node (5) at (4,0) [prob] {};
\node (5a) at (4,1) [end,label={above:$0,A_j^t,B_j^{\vec{t}}$}] {};
\node (6) at (5,0) [end,label={above:$0,B_j^t,B_j^{\vec{t}}$}] {};
\node (label) at (0,-1) [anchor=west] {if $\gamma\not\in\{\inc(j),\dec(j)\}$.};

\draw (0.3,0) to (2);
\draw (2) to (3); \draw (2) to [dashed,bend right] (2a);
\draw (2a) to (2b); \draw (2a) to [bend right] (2);
\draw (3) to (4); \draw (3) to (3a);
\draw (4) to (5); \draw (4) to (4a);
\draw (5) to (6); \draw (5) to (5a);
\end{scope}

\node (caption) at (-1,-6.5) [anchor=west] {$I_{i,\gamma}^t$:};

\begin{scope}[yshift=-8cm] 
\node (1) at (0,0) [play,label={above:0}] {};
\node (2) at (0.7,0) [coordinate,label=right:{$S_{k,\inc(j)}^{\vec{t}}$}] {};
\node (label) at (-1,-0.8) [anchor=west]
{if $\iota_i=\Inc{j}{k}$;};

\draw (-0.7,0) to (1);
\draw (1) to (2);
\end{scope}

\begin{scope}[yshift=-10.1cm] 
\node (1) at (0,0) [play,label={above:0}] {};
\node (2) at (0.7,0.5) [coordinate,label=right:{$S_{k,\zero(j)}^{\vec{t}}$}]
{};
\node (3) at (0.7,-0.5) [coordinate,label=right:{$S_{l,\dec(j)}^{\vec{t}}$}]
{};
\node (label) at (-1,-1.2) [anchor=west]
{if $\iota_i=\Test{j}{k}{l}$;};

\draw (-0.7,0) to (1);
\draw (1) to (2);
\draw (1) to (3);
\end{scope}

\begin{scope}[yshift=-12.5cm] 
\node (1) at (0,0) [play,label={above:0}] {};
\node (2) at (0.7,0) [coordinate,label=right:{$(0,\dots,0)$}] {};
\node (label) at (-1,-0.8) [anchor=west] {if $\iota_i=\Halt$.};

\draw (-0.7,0) to (1);
\draw (1) to (2);
\end{scope}

\end{tikzpicture}
\caption{\label{fig:two-counter-simulation} Simulating a two-counter machine.}
\end{figure}
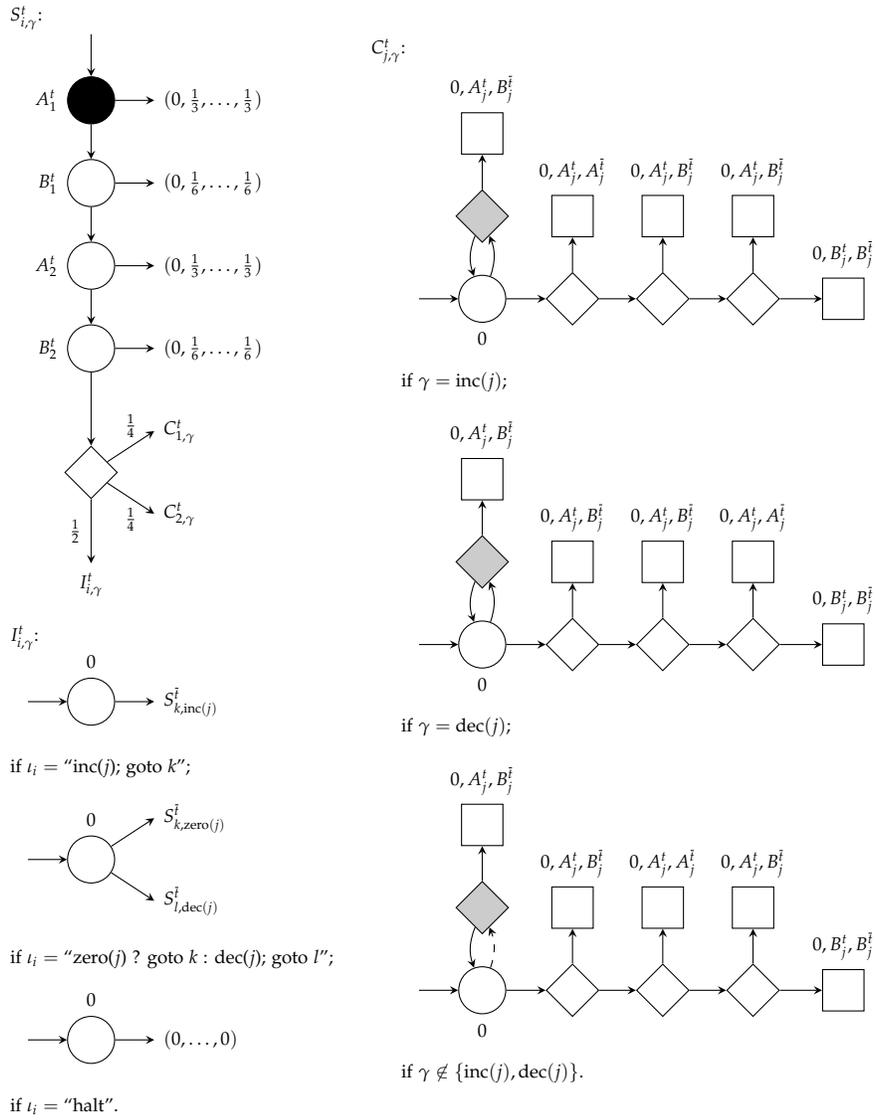

For any pure strategy profile $\vec{\sigma}$ of $\calG$ where player~0 wins
almost surely, let $x_0v_0\prec
x_1v_1\prec x_2v_2\prec \dots$ ($x_i\in V^*,v\in V$, $x_0=\epsilon$) be the
(unique) sequence of all consecutive histories such that, for each $n\in\bbN$,
$v_n$ is a black vertex and
$\Prob_{v_0}^{\vec{\sigma}}(x_n v_n\cdot V^\omega)>0$.
Additionally, let $\gamma_0,\gamma_1,\dots$
be the corresponding sequence of instructions, \ie $\gamma_n=\gamma$ for the
unique instruction~$\gamma$ such that $v_n$ lies in one of the
gadgets~$S_{i,\gamma}^t$ (where $t=n\bmod 2$).
For each $j\in\{1,2\}$ and $n\in\bbN$, we define two conditional probabilities
$a_j^n$ and $p_j^n$ as follows:
\begin{align*}
a_j^n &:=\Prob_{v_0}^{\vec{\sigma}}(\Reach(F_{A_j^{n\bmod 2}})\mid
 x_n v_n\cdot V^\omega) \\
\shortintertext{and}
p_j^n &:=\Prob_{v_0}^{\vec{\sigma}}(\Reach(F_{A_j^{n\bmod 2}})\mid
 x_n v_n\cdot V^\omega\setminus x_{n+2} v_{n+2}\cdot V^\omega).
\end{align*}
Finally, for each $j\in\{1,2\}$ and $n\in\bbN$, we define an ordinal number
$c_j^n\leq\omega$ as follows: After the history $x_n v_n$, with
probability~\smallfrac{1}{8}\ the play proceeds to the vertex controlled by
player~0 in the counter gadget~$C_{j,\gamma_n}^t$ (where $t=n\bmod 2$). The
number $c_j^n$ is
defined to be the maximal number of subsequent visits to the grey vertex inside
this gadget (where $c_j^n=\omega$ if, on one path, the grey
vertex is visited infinitely often). Note that, by the construction of
$C_{j,\gamma}^t$, it holds that $c_j^n=0$ if $\gamma_n=\zero(j)$ or
$\gamma_n=\init$.

\begin{lemma}\label{lemma:reduction}
Let $\vec{\sigma}$ be a pure strategy profile of $(\calG,v_0)$ where
player~0 wins almost surely. Then $\vec{\sigma}$ is a Nash equilibrium if
and only if
\begin{align}\label{equation:counter-update}
c_j^{n+1} &=
\begin{cases}
1+c_j^n & \text{if $\gamma_{n+1}=\inc(j)$,} \\
c_j^n-1 & \text{if $\gamma_{n+1}=\dec(j)$,} \\
c_j^n=0 & \text{if $\gamma_{n+1}=\zero(j)$,} \\
c_j^n & \text{otherwise}
\end{cases}
\end{align}
for all $j\in\{1,2\}$ and $n\in\bbN$.
\end{lemma}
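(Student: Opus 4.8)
The plan is to reduce ``$\vec\sigma$ is a Nash equilibrium'' to a family of purely local best-response conditions at the $S$-gadgets. First I would observe that all of player~0's choices — the number of loops taken at the grey vertex of each $C$-gadget and the branching in each $I$-gadget — lead only to terminals at which player~0 wins (the halting vertex being unreachable, since player~0 wins almost surely), so player~0 is automatically a best response and the equilibrium condition rests entirely on the eight players $A_j^t,B_j^t$. Each of these controls only vertices inside the $S$-gadgets of its own parity~$t$, where the sole decision is whether to ``leave'' (reaching a terminal worth $\smallfrac13$ for an $A$-player and $\smallfrac16$ for a $B$-player, but $0$ for player~0) or to ``continue''. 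Since player~0 wins almost surely nobody leaves, and because the passage from the black vertex $v_n$ to any controlled vertex of that gadget is deterministic, $\vec\sigma$ is a Nash equilibrium if and only if at every reachable $v_n$ one has $a_j^n\ge\smallfrac13$ and $\beta_j^n\ge\smallfrac16$ for $j\in\{1,2\}$, where $\beta_j^n:=\Prob_{v_0}^{\vec\sigma}(\Reach(F_{B_j^{n\bmod2}})\mid x_nv_n\cdot V^\omega)$ is the $B$-player's analogue of $a_j^n$.

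The key step, which makes the two inequalities collapse, is a conservation law. Inspecting the three shapes of $C_{j,\gamma}^s$ I would note that at every one of its terminals exactly one of $A_j^{n\bmod2},B_j^{n\bmod2}$ wins, whereas neither wins at a $C_{j'}$-terminal ($j'\ne j$) nor at the halting vertex; since the play almost surely ends in some counter gadget and reaches a $C_j$-terminal from $v_n$ with probability exactly $\smallfrac12$, this yields $a_j^n+\beta_j^n=\smallfrac12$. Hence $\beta_j^n\ge\smallfrac16$ is equivalent to $a_j^n\le\smallfrac13$, and together with $a_j^n\ge\smallfrac13$ the two best-response conditions force the single equality $a_j^n=\smallfrac13$ at every reachable $v_n$. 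Thus $\vec\sigma$ is a Nash equilibrium iff $a_j^n=\smallfrac13$ for all $j$ and all reachable~$n$.

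It then remains to match $a_j^n=\smallfrac13$ with \eqref{equation:counter-update}. Unfolding two levels — conditioning at the stochastic vertex of the step-$n$ gadget ($\smallfrac14$ into $C_j$, $\smallfrac14$ into $C_{j'}$, $\smallfrac12$ onward to $v_{n+1}$) and again at $v_{n+1}$ — gives the recursion
\[a_j^n=\tfrac14\bigl(1-(\tfrac12)^{c_j^n+3}\bigr)+\tfrac18\,W_j^{n+1}+\tfrac14\,a_j^{n+2}\]
(this is just $a_j^n=\tfrac14 a_j^{n+2}+\tfrac34 p_j^n$ made explicit), where $W_j^{n+1}$ is the probability that the level-$(n\bmod2)$ player $A_j^{n\bmod2}$ wins as the ``next-level'' player inside $C_{j,\gamma_{n+1}}^{(n+1)\bmod2}$. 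Reading off the three gadget shapes, $W_j^{n+1}$ equals $(\tfrac12)^{c_j^{n+1}+1}$, $(\tfrac12)^{c_j^{n+1}+3}$ or $(\tfrac12)^{c_j^{n+1}+2}$ according as $\gamma_{n+1}=\inc(j)$, $\dec(j)$ or neither, while the missing loop edge forces $c_j^{n+1}=0$ whenever $\gamma_{n+1}\in\{\init,\zero(j)\}$. Substituting $a_j^n=a_j^{n+2}=\smallfrac13$ collapses the recursion to $W_j^{n+1}=(\tfrac12)^{c_j^n+2}$, and comparing exponents in each case reproduces precisely the four clauses of \eqref{equation:counter-update} (the forbidden combination $\gamma_{n+1}=\dec(j)$ with $c_j^n=0$ correctly making the equation unsolvable). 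For the converse I would plug the update rule back into the series for $a_j^n$ and check that it telescopes: in the paired contributions of steps $n+2r$ and $n+2r+1$ the next-level term $(\tfrac12)^{c_j^{n+2r}+2}$ exactly offsets the deficit $(\tfrac12)^{c_j^{n+2r}+3}$, leaving $\sum_{r\ge0}\tfrac14(\tfrac14)^r=\smallfrac13$, so every best-response condition is (weakly) met and $\vec\sigma$ is an equilibrium.

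I expect the main obstacle to be the bookkeeping forced by the two interleaved parities: each gadget simultaneously carries a current-level pair $A_j^t,B_j^t$ and a next-level pair $A_j^{\vec t},B_j^{\vec t}$, so the counter $c_j^n$ controls both the current-level payoff in the step-$n$ gadget and the next-level payoff that feeds back into $a_j^{n-1}$. Getting the three exponents of $W_j^{n+1}$ right, and correctly handling the boundary cases $c_j^n=\omega$ and the interaction of zero-tests with decrements, is where care is needed; the conservation law $a_j^n+\beta_j^n=\smallfrac12$ is the device that tames the analysis, turning a family of one-sided inequalities into the exact equalities that encode a faithful counter update.
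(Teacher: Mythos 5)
Your proof is correct and follows essentially the same route as the paper's: your first step (Nash equilibrium iff $a_j^n=\smallfrac{1}{3}$ at every reachable $v_n$, obtained from the leave-thresholds $\smallfrac{1}{3}$ and $\smallfrac{1}{6}$ together with the conservation law $a_j^n+\beta_j^n=\smallfrac{1}{2}$) is exactly the paper's first claim, and your closed-form gadget probabilities $\frac{1}{4}\bigl(1-(\frac{1}{2})^{c_j^n+3}\bigr)+\frac{1}{8}W_j^{n+1}$ are precisely the paper's binary-expansion computation written in different notation. The only difference is organizational: where the paper factors the argument through the intermediate invariant $p_j^n=\smallfrac{1}{4}$ and proves $p_j^n=\smallfrac{1}{4}\Leftrightarrow a_j^n=\smallfrac{1}{3}$ by a boundedness argument on the recurrence $a_j^{n+2}=4a_j^n-1$, you inline that step by unrolling the recursion into a telescoping geometric series, which is the same mathematics.
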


Here $+$ and $-$ denote the usual addition and subtraction of ordinal numbers
respectively (satisfying $1+\omega=\omega-1=\omega$).
The proof of \cref{lemma:reduction} goes through several claims. In the
following, let $\vec{\sigma}$ be a pure strategy profile of $(\calG,v_0)$
where player~0 wins almost surely. The first claim
gives a necessary and sufficient condition on the probabilities~$a_j^n$ for
$\vec{\sigma}$ to be a Nash equilibrium.

\begin{claim*}
The profile $\vec{\sigma}$ is a Nash equilibrium iff $a_j^n=
\smallfrac{1}{3}$ for all $j\in\{1,2\}$ and $n\in\bbN$.
\end{claim*}
\begin{proof}
($\Rightarrow$) Assume that $\vec{\sigma}$ is a Nash equilibrium. Clearly,
this implies that
$a_j^n\geq\smallfrac{1}{3}$ for all $n\in\bbN$
since otherwise some player~$A_j^t$ could improve her payoff by leaving one of
the gadgets~$S_{i,\gamma}^t$. Let
\[b_j^n:=\Prob_{v_0}^{\vec{\sigma}}(\Reach(F_{B_j^{n\bmod 2}})\mid
 x_n v_n\cdot V^\omega).\]
We have $b_j^n\geq\smallfrac{1}{6}$ for all $n\in\bbN$
since otherwise some player~$B_j^t$ could improve her payoff by leaving one of
the gadgets~$S_{i,\gamma}^t$. Note that
at every terminal vertex of the counter gadgets $C_{j,\gamma}^t$ and
$C_{j,\gamma}^{\vec{t}}$ either player~$A_j^t$ or player~$B_j^t$ wins.
The conditional probability that, given the history $x_n v_n$, we reach
one of those gadgets is
$\sum_{k\in\bbN}\frac{1}{2^k}\cdot\frac{1}{4}=\frac{1}{2}$
for all $n\in\bbN$, so we have $a_j^n=\smallfrac{1}{2}-b_j^n$ for all
$n\in\bbN$. Since $b_j^n\geq\smallfrac{1}{6}$, we arrive at $a_j^n
\leq\smallfrac{1}{2}-\smallfrac{1}{6}=\smallfrac{1}{3}$, which proves the
claim.

($\Leftarrow$) Assume that $a_j^n=\smallfrac{1}{3}$ for all $n\in\bbN$.
Clearly, this implies that none of the players~$A_j^t$ can improve her
payoff. To show that none of the players~$B_j^t$ can improve her payoff, it
suffices to show that $b_j^n\geq\smallfrac{1}{6}$ for all $n\in\bbN$. But
with the same argumentation as above, we have $b_j^n=\smallfrac{1}{2}-
a_j^n$ and thus $b_j^n=\smallfrac{1}{6}$ for all $n\in\bbN$, which proves
the claim.
\end{proof}

\noindent
The second claim relates the probabilities~$a_j^n$ and $p_j^n$.

\begin{claim*}
Let $j\in\{1,2\}$. Then $a_j^n=\smallfrac{1}{3}$ for all $n\in\bbN$ if and only
if $p_j^n=\smallfrac{1}{4}$ for all $n\in\bbN$.
\end{claim*}
\begin{proof}
($\Rightarrow)$ Assume that $a_j^n=\smallfrac{1}{3}$ for all $n\in\bbN$.
We have $a_j^n=p_j^n+\smallfrac{1}{4}\cdot a_j^{n+2}$ and therefore
$\smallfrac{1}{3}=p_j^n+\smallfrac{1}{12}$ for all $n\in\bbN$. Hence,
$p_j^n=\smallfrac{1}{4}$ for all $n\in\bbN$.

($\Leftarrow$) Assume that $p_j^n=\smallfrac{1}{4}$ for all $n\in\bbN$.
Since $a_j^n=p_j^n+\smallfrac{1}{4}\cdot a_j^{n+2}$ for all $n\in\bbN$,
the numbers~$a_j^n$ must satisfy the following recurrence: $a_j^{n+2}=
4a_j^n-1$. Since all the numbers~$a_j^n$ are
probabilities, we have $0\leq a_j^n\leq 1$ for all $n\in\bbN$. It is easy
to see that the only values for $a_j^0$ and $a_j^1$
such that $0\leq a_j^n\leq 1$ for all $n\in\bbN$ are $a_j^0=a_j^1=
\smallfrac{1}{3}$. But this implies that $a_j^n=\smallfrac{1}{3}$ for all
$n\in\bbN$.
\end{proof}

\noindent
Finally, the last claim relates the numbers $p_j^n$ to
\cref{equation:counter-update}.

\begin{claim*}
Let $j\in\{1,2\}$. Then $p_j^n=\smallfrac{1}{4}$ for all $n\in\bbN$ if and only
if \cref{equation:counter-update} holds for all $n\in\bbN$.
\end{claim*}
\begin{proof}
Let $n\in\bbN$, and let $t=n\bmod 2$. The probability $p_j^n$ can be
expressed as the sum of the probability that the play reaches a terminal vertex
that is winning for player~$A_j^t$ inside $C_{j,\gamma_n}^t$ and the
probability that the play reaches such a vertex inside
$C_{j,\gamma_{n+1}}^{\vec{t}}$.
The first probability does not depend on $\gamma_n$, but the second depends on
$\gamma_{n+1}$. Let us consider the case that $\gamma_{n+1}=\inc(j)$. In this
case, the aforementioned sum is equal to the following sum of two binary
numbers:
\[0.00\underbrace{1\dots 1}_{\rlap{\scriptsize $c_j^n$ times}}111 +
0.000\underbrace{0\dots 0}_{\rlap{\scriptsize $c_j^{n+1}$ times}}100\,.\]
Obviously, this sum is equal to \smallfrac{1}{4}\ iff $c_j^{n+1}=1+c_j^n$.
For any other value of $\gamma_{n+1}$, the argumentation is similar, and we
omit it here.
\end{proof}

\begin{proof}[Proof of \cref{lemma:reduction}]
By the first claim, the profile
$\vec{\sigma}$ is a Nash equilibrium iff $a_j^n=\smallfrac{1}{3}$ for all
$j\in\{1,2\}$ and $n\in\bbN$.
By the second claim, the latter is true if $p_j^n=\smallfrac{1}{4}$
for all $j\in\{1,2\}$ and $n\in\bbN$. Finally, by the last claim, this is the
case iff \cref{equation:counter-update} holds for all $j\in\{1,2\}$ and
$n\in\bbN$.
\end{proof}

To establish the reduction, it remains to show that the computation of $\calM$
is infinite iff the game $(\calG,v_0)$ has a pure Nash equilibrium where
player~0 wins almost surely.

($\Rightarrow$) Assume that the computation $\rho=\rho(0)\rho(1)\dots$ of
$\calM$ is infinite. We define a pure strategy~$\sigma_0$ for player~0 as
follows: For a history that ends in one of the instruction
gadgets~$I_{i,\gamma}^t$ after visiting a black vertex exactly $n$~times,
player~0 tries to move to the neighbouring gadget~$S_{k,\gamma'}^{\vec{t}}$
such that $\rho(n)$ refers to instruction number~$k$ (which is always possible
if $\rho(n-1)$ refers to instruction number~$i$; in any other case, $\sigma_0$
might be defined arbitrarily). In particular, if $\rho(n-1)$ refers to
instruction $\iota_i=\Test{j}{k}{l}$, then player~0 will move to the
gadget~$S_{k,\zero(j)}^{\vec{t}}$ if the value of the counter in configuration
$\rho(n-1)$ is~0 and to the gadget~$S_{l,\dec(j)}^{\vec{t}}$ otherwise.
For a history that ends in one of the gadgets~$C_{j,\gamma}^t$ after
visiting a black vertex exactly $n$~times and a grey vertex exactly $m$~times,
player~0 will move to the grey vertex again iff $m$ is strictly
less than the value of the counter~$j$ in configuration $\rho(n-1)$. So
after entering $C_{j,\gamma}^t$, player~0's strategy is to loop through the
grey vertex exactly as many times as given by the value of the counter~$j$ in
configuration~$\rho(n-1)$.

Any other player's pure strategy is ``moving down at any time''.
We claim that the resulting strategy profile~$\vec{\sigma}$ is a Nash
equilibrium of $(\calG,v_0)$ where player~0 wins almost surely.

Since, according to her strategy, player~0 follows the computation of $\calM$,
no vertex inside an instruction gadget~$I_{i,\gamma}^t$ where $\iota_i$ is the
halt instruction is ever reached. Hence, with probability~1 a terminal vertex
in one of the counter gadgets is reached. Since player~0 wins at any such
vertex, we can conclude that she wins almost surely.

It remains to show that $\vec{\sigma}$ is a Nash equilibrium. By the definition
of player~0's strategy~$\sigma_0$, we have the following for all~$n\in\bbN$:
1.\ $c_j^n$ is the value of counter~$j$ in configuration
$\rho(n)$; 2.\ $c_j^{n+1}$ is the value of counter~$j$ in configuration
$\rho(n+1)$; 3.\ $\gamma_{n+1}$ is the instruction corresponding to the counter
update from configuration $\rho(n)$ to $\rho(n+1)$. Hence,
\cref{equation:counter-update} holds, and $\vec{\sigma}$ is a Nash equilibrium
by \cref{lemma:reduction}.

$(\Leftarrow)$ Assume that $\vec{\sigma}$ is a pure Nash equilibrium of
$(\calG,v_0)$ where player~0 wins almost surely. We define an infinite sequence
$\rho=\rho(0)\rho(1)\dots$ of \emph{pseudo configurations} (where the counters
may take the value~$\omega$) of $\calM$ as follows.
Let $n\in\bbN$, and assume that $v_n$ lies inside the
gadget~$S_{i,\gamma_n}^t$ (where $t=n\bmod 2$); then $\rho(n):=
(i,c_1^n,c_2^n)$.

We claim that $\rho$ is, in fact, the (infinite) computation of $\calM$. It
suffices to verify the following two properties:
\begin{enumerate}
 \item $\rho(0)=(1,0,0)$;
 \item $\rho(n)\vdash\rho(n+1)$ for all $n\in\bbN$.
\end{enumerate}
Note that we do not have to show explicitly that each $\rho(n)$ is a
configuration of $\calM$ since this follows easily by induction from 1.\ and
2. Verifying the first property is easy: $v_0$ lies inside $S_{1,\init}^0$
(and we are at instruction~1), which is linked to the counter gadgets
$C_{1,\init}^0$ and $C_{2,\init}^0$. The edge leading to the grey vertex is
missing in these gadgets. Hence, $c_1^0$ and $c_2^0$ are both equal to~0.

For the second property, let $\rho(n)=(i,c_1,c_2)$ and $\rho(n+1)=(i',c_1',
c_2')$. Hence, $v_n$ lies inside $S_{i,\gamma}^t$ and $v_{n+1}$ inside
$S_{i',\gamma'}^{\vec{t}}$ for suitable $\gamma,\gamma'$ and $t=n\bmod 2$. We
only proof the claim for the case that $\iota_i=\Test{2}{k}{l}$; the other
cases are straightforward. Note that, by the construction of the
gadget~$I_{i,\gamma}^t$, it must be the case that either $i'=k$ and
$\gamma'=\zero(2)$, or $i'=l$ and $\gamma'=\dec(2)$. By \cref{lemma:reduction},
if $\gamma'=\zero(2)$, then $c_2'=c_2=0$ and $c_1'=c_1$, and if
$\gamma'=\dec(2)$, then $c_2'=c_2-1$ and $c_1'=c_1$. This implies $\rho(n)
\vdash\rho(n+1)$: On the one hand, if $c_2=0$, then $c_2'\not=c_2-1$, which
implies $\gamma'\not=\dec(2)$ and thus $\gamma'=\zero(2)$, $i'=k$ and $c_2'=
c_2=0$. On the other hand, if $c_2>0$, then $\gamma'\not=\zero(2)$ and thus
$\gamma'=\dec(2)$, $i'=l$ and $c_2'=c_2-1$.\qed

\subsection{Finite-state equilibria}

It follows from the proof of \cref{thm:undecidability} that Nash equilibria may
require infinite memory (even if we are only interested in whether a player wins
with probability~0 or 1). More precisely, we have the following proposition.

\begin{proposition}\label{prop:inf-mem}
There exists an SSMG that has a pure Nash equilibrium where player~0 wins
almost surely but that has no finite-state Nash equilibrium where player~0
wins with positive probability.
\end{proposition}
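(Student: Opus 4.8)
The plan is to run the reduction behind \cref{thm:undecidability} on a fixed two-counter machine whose computation is infinite but whose counters grow without bound; the cleanest choice is the one-instruction machine $\calM$ with $\iota_1=\Inc{1}{1}$, whose computation is $(1,0,0),(1,1,0),(1,2,0),\dots$. Let $(\calG,v_0)$ be the SSMG obtained from $\calM$. Because the computation of $\calM$ is infinite, the forward direction of the proof of \cref{thm:undecidability} already exhibits a pure Nash equilibrium of $(\calG,v_0)$ in which player~0 wins almost surely; there player~0 must loop through the grey vertex of the counter gadget at stage~$n$ exactly $c_1^n=n$ times, so the strategy genuinely uses infinite memory. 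This settles the first half of the statement, and it remains to prove that no \emph{finite-state} Nash equilibrium lets player~0 win with positive probability.

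I would prove this by contradiction: assume $\vec\sigma$ is a finite-state Nash equilibrium in which player~0 wins with positive probability, and consider its spine $x_0v_0\prec x_1v_1\prec\dots$ of black vertices of positive probability. First I would rule out an \emph{infinite} spine. On an infinite spine player~0 wins almost surely, so the three claims underlying \cref{lemma:reduction} apply: the instability of the recurrence $a_j^{n+2}=4a_j^n-1$ forces every $a_j^n$ to equal $\smallfrac{1}{3}$, hence $p_j^n=\smallfrac{1}{4}$ and \cref{equation:counter-update}, so that $c_1^n$ is the genuine value~$n$ of counter~1 and is therefore unbounded. But a finite-state strategy enters each counter gadget in one of finitely many memory states and so realises only finitely many values of $c_j^n$; the ordinals $(c_1^n)_n$ cannot be unbounded, a contradiction. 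Hence the spine is \emph{finite}: some auxiliary player leaves the game at a finite stage.

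The crux --- and the step I expect to be the main obstacle --- is to show that such a truncation is incompatible with player~0 winning with positive probability. The idea is a conservation identity for the auxiliary players: on a faithful infinite run the probability of reaching \emph{some} counter-$j$ gadget from stage~$n$ equals $\sum_{m\ge n}(\smallfrac{1}{2})^{m-n}\cdot\smallfrac{1}{4}=\smallfrac{1}{2}$, and since exactly one of $A_j^t,B_j^t$ wins at each terminal of these gadgets we get $a_j^n+b_j^n=\smallfrac{1}{2}$. On a spine truncated at stage~$N$, however, the departure diverts the probability mass $(\smallfrac{1}{2})^{N-n}$ away from the future counter-$j$ gadgets while the exit terminal returns to the pair $A_j^t,B_j^t$ strictly less than this lost mass would have contributed, leaving $a_j^n+b_j^n<\smallfrac{1}{2}$ at every stage $n<N$. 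As the two equilibrium thresholds sum to exactly $\smallfrac{1}{3}+\smallfrac{1}{6}=\smallfrac{1}{2}$, this deficit already forbids both $a_j^0\ge\smallfrac{1}{3}$ and $b_j^0\ge\smallfrac{1}{6}$ for some counter~$j$, so one of $A_j^0,B_j^0$ strictly prefers to leave at the very first stage. In equilibrium it does, the play never reaches a counter gadget, and player~0 wins with probability~$0$ --- the desired contradiction. Making this rigorous requires two checks that I anticipate being the delicate points: (i) verifying that a premature exit does not over-compensate the remaining auxiliary players, so that the strict inequality $a_j^n+b_j^n<\smallfrac{1}{2}$ holds for at least one counter~$j$ whichever player triggers the truncation; and (ii) interleaving the two parities $t\in\{0,1\}$ so that the conclusion ``some auxiliary player leaves at $v_0$'' is forced regardless of where and by whom the spine is first cut.
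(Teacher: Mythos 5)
Your first half (existence of the pure equilibrium) and your treatment of the infinite-spine case are fine --- the latter is essentially the boundedness argument the paper itself uses. But the step you correctly identified as the crux, ruling out a truncated equilibrium in the \emph{unmodified} game $\calG$, is not just delicate: it is false, so no repair of the conservation argument can succeed. The inequality goes the wrong way for exactly the reason you feared in your point~(i): the exit terminal next to an $A_j^t$-vertex pays $\smallfrac{1}{3}$ to \emph{all eight} auxiliary players, so one unit of probability mass diverted there contributes $\smallfrac{1}{3}+\smallfrac{1}{3}=\smallfrac{2}{3}$ to the pair sum $a_j^n+b_j^n$, \emph{more} than the $\smallfrac{1}{2}$ that mass would have contributed via the counter gadgets. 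Concretely, for the machine with the single instruction \Inc{1}{1}, the following \emph{positional} profile is a Nash equilibrium of $(\calG,v_0)$ in which player~0 wins with probability $\smallfrac{1}{2}$: all four players of the initial gadget $S_{1,\init}^0$ move down; player $A_1^1$ leaves the game at the top vertex of $S_{1,\inc(1)}^1$; every other $S$-gadget vertex (all unreached) is set to ``leave''; player~0's choices are irrelevant, since each of her reachable vertices has a unique successor. Player~0 then wins on the mass $\smallfrac{1}{2}$ entering $C_{1,\init}^0$ and $C_{2,\init}^0$. Checking deviations: $A_1^0$ receives $\smallfrac{1}{4}\cdot\smallfrac{7}{8}+\smallfrac{1}{2}\cdot\smallfrac{1}{3}=\frac{37}{96}>\smallfrac{1}{3}$ and $B_1^0$ receives $\smallfrac{1}{4}\cdot\smallfrac{1}{8}+\smallfrac{1}{2}\cdot\smallfrac{1}{3}=\frac{19}{96}>\smallfrac{1}{6}$ (symmetrically for $A_2^0,B_2^0$), so no stage-$0$ player wants to leave; $A_1^1$ gets $\smallfrac{1}{3}$ (conditional on her vertex being reached) by leaving versus $\smallfrac{1}{6}$ by moving down, since $B_1^1$ would then leave; and every other player's vertices stay unreached under any unilateral deviation of that player. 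So the unmodified reduction game genuinely admits a finite-state equilibrium with positive payoff for player~0, and the proposition is false for your choice of game.

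This is precisely why the paper does not use $\calG$ itself but a modified game $\calG'$: add a new initial vertex $v_1$ controlled by a fresh player~1, who can either move to $v_0$ or quit to a terminal paying $1$ to herself and $0$ to everyone else, and declare that player~1 wins at exactly those terminals of $\calG$ where player~0 wins. In any equilibrium of $(\calG',v_1)$ where player~0 wins with positive probability, play must enter $\calG$ with positive probability, and player~1's conditional payoff from entering must be at least her outside option of $1$; hence player~1 --- and therefore player~0 --- wins almost surely conditional on entering. This gate reduces the ``positive probability'' case to the ``almost surely'' case, after which \cref{lemma:reduction} and your boundedness argument finish the proof. To salvage your write-up, keep your two correct parts and replace the conservation argument by this gate construction (your citation of the instability of the recurrence $a_j^{n+2}=4a_j^n-1$ is also misplaced --- that is needed for the converse direction of the paper's second claim, while the direction you use is the straightforward computation --- but that slip is harmless).
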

\begin{proof}
Consider the game~$(\calG,v_0)$ constructed in the proof of
\cref{thm:undecidability} for the machine~$\calM$ consisting of the single
instruction \Inc{1}{1}. We modify this game by adding a new initial
vertex~$v_1$ which is controlled by a new player, player~1, and from where she
can either move to~$v_0$ or to a new terminal vertex where she receives
payoff~1 and every other player receives payoff~0. Additionally, player~1 wins
at every terminal vertex of the game~$\calG$ that is winning for player~0. Let
us denote the modified game by $\calG'$.

Since the computation of $\calM$ is infinite, the game~$(\calG,v_0)$ has a pure
Nash equilibrium where player~0 wins almost surely. This equilibrium induces
a pure Nash equilibrium of~$(\calG',v_1)$ where player~0 wins almost surely.

Now assume that there exists a finite-state Nash equilibrium of $(\calG',v_1)$
where player~0 wins with positive probability. Such an equilibrium induces a
finite-state Nash equilibrium of $(\calG,v_0)$ where player~1, and thus also
player~0, wins almost surely since otherwise player~1 would play to $v_0$ with
probability~1. By \cref{lemma:reduction}, this implies that player~0 updates
the counter correctly. However, since player~0 uses a finite-state strategy,
the corresponding counter values are bounded by a constant, a contradiction.
\end{proof}

Note that \FinNE is recursively enumerable: To decide whether an
SSMG~$(\calG,v_0)$ has a finite-state Nash equilibrium with payoff
$\geq\vec{x}$ and $\leq\vec{y}$, one can just enumerate all possible
finite-state profiles and check for each of them whether the profile is a Nash
equilibrium with
the desired properties by analysing the finite Markov chain that is generated by
this profile (where one identifies states that correspond to the same vertex
and memory state). Hence, to show the undecidability of \FinNE, we cannot
reduce from the non-halting problem but from the halting problem for
two-counter machines (which is recursively enumerable itself).

\begin{theorem}\label{thm:undecidability-fin-ne}
\FinNE is undecidable.
\end{theorem}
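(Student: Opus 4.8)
The plan is to reduce from the \emph{halting} problem for two-counter machines instead of from its complement. As the remark preceding the theorem observes, \FinNE is recursively enumerable, so a many-one reduction from the ($\Pi_1$-complete) non-halting problem cannot exist; instead I would compute from a machine~$\calM$ an SSMG~$(\calG',v_0)$ for which $\calM$ \emph{halts} iff $(\calG',v_0)$ has a \emph{finite-state} Nash equilibrium where player~0 wins almost surely. Since the halting problem is undecidable, this yields undecidability of \FinNE.

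The construction reuses the gadgets of \cref{fig:two-counter-simulation} and the equilibrium characterisation of \cref{lemma:reduction}, modified in two places. First, I would redirect the halt instruction gadget so that the player-0 vertex of $I_{i,\gamma}^t$ for a halt instruction moves to a terminal at which player~0 \emph{wins}, rather than to $(0,\dots,0)$; with this change, reaching the halting instruction no longer ruins player~0's payoff, so that a finite, successful computation can be turned into a good equilibrium. Second---the decisive point---I would not apply the construction to~$\calM$ directly but to a preprocessed machine~$\calM'$ that halts iff $\calM$ does and whose computation, whenever infinite, drives one of its counters to infinity. This is achieved by the standard device of having $\calM'$ simulate~$\calM$ while maintaining a never-decremented step counter (folded into the two counters by the usual encoding, or realised by adjoining a third ``clock'' counter with two extra players $A_3^t,B_3^t$ and a permanently incrementing gadget). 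Then every infinite run has an unbounded counter, so that an equilibrium following it would force player~0 to realise loop counts $c_j^n$ that grow beyond every bound.

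Correctness then follows from \cref{lemma:reduction}. If $\calM$ halts, the profile following the finite computation of $\calM'$---player~0 reproducing each counter value as a loop count and finally moving to the winning halt terminal, every other player moving down---is a finite-state Nash equilibrium where player~0 wins almost surely, since the run is finite, all loop counts stay bounded, and \cref{lemma:reduction} supplies the equilibrium property. Conversely, suppose $(\calG',v_0)$ has a finite-state Nash equilibrium where player~0 wins almost surely. By \cref{lemma:reduction} all counter updates hold, so the loop counts $c_j^n$ track the true counter values of~$\calM'$ along the induced run. But, exactly as in the proof of \cref{prop:inf-mem}, a finite-state strategy with $K$ memory states can realise in a counter gadget only a loop count in $\{0,1,\dots,K\}\cup\{\omega\}$ (after $K$ loops player~0 is caught in a cycle of memory states and loops forever); hence the finite values among the $c_j^n$ are bounded by~$K$. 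This contradicts the presence of an unbounded counter unless the run is finite---that is, unless $\calM'$, and hence $\calM$, halts.

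The hard part is exactly the reason the preprocessing step is indispensable. Without it, a machine that runs forever while keeping both counters bounded (equivalently, one that eventually cycles through finitely many configurations) would admit a finite-state equilibrium that simply follows its ultimately periodic computation; then ``$(\calG',v_0)$ has a finite-state equilibrium with player~0 winning almost surely'' would characterise \emph{looping} rather than \emph{halting} machines, and the intended reduction would collapse. Forcing every infinite run to contain an unbounded counter closes this loophole and is what makes infinite runs genuinely require infinite memory. The remaining work---re-tuning the payoffs $\frac13,\frac16$ and the splitting probabilities of the $S$-gadgets if a third counter is adjoined, and checking the boundary behaviour of \cref{lemma:reduction} for the finite histories ending at the now-winning halt terminal---is routine bookkeeping of the kind already carried out for \cref{thm:undecidability}.
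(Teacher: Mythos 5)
Your proposal is correct and is essentially the paper's own proof: the paper likewise reduces from the halting problem by letting player~0 win at the halt terminals and adjoining an extra, always-incremented counter (with four additional players, i.e.\ your $A_3^t,B_3^t$ for $t\in\{0,1\}$) to the game of \cref{thm:undecidability}, so that a non-halting machine forces unbounded loop counts which no finite-state strategy can realise, while a halting machine yields a finite-state equilibrium via \cref{lemma:reduction}. The only cosmetic difference is that you also offer the variant of folding the clock into the machine by preprocessing; the substance, including leaving the boundary bookkeeping as routine detail, matches the paper.
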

\begin{proof}
The construction is similar to the one for proving undecidability of \PureNE.
Given a two-counter machine~$\calM$, we modify the SSMG~$\calG$ constructed in
the proof of \cref{thm:undecidability} by
adding another ``counter'' (together with four more players for checking
whether the counter is updated correctly) that has to be
incremented in each step. Moreover, additionally to the terminal vertices in
the gadgets~$C_{j,\gamma}^t$, we let player~0 win at the terminal vertex
in each of the gadgets~$I_{i,\gamma}$ where $\iota_i=\Halt$. Let us
denote the new game by $\calG'$. Now, if $\calM$ does not halt, any pure Nash
equilibrium of $(\calG',v_0)$ where player~0 wins almost surely needs
infinite memory: to win almost surely, player~0 must follow the computation of
$\calM$ and increment the new counter at each step. On the other hand, if
$\calM$ halts, then we can easily construct a finite-state Nash equilibrium of
$(\calG',v_0)$ where player~0 wins almost surely. Hence, $(\calG',v_0)$ has a
finite-state Nash equilibrium where player~1 wins almost surely iff the
machine~$\calM$ halts. The details of the construction are left to the reader.
\end{proof}

\section{Conclusion}

We have analysed the complexity of deciding whether a simple stochastic
multiplayer game has a Nash equilibrium whose payoff falls into a
certain interval.
Our results demonstrate that the presence of both stochastic vertices and
more than two players makes the problem much more complicated than when one of
these factors is absent. In particular, the problem of deciding the existence
of a pure-strategy Nash equilibrium where player~0 wins almost surely is
undecidable for simple stochastic multiplayer games, whereas it is contained in
$\NP\cap\coNP$ for two-player, zero-sum simple stochastic games \cite{Condon92}
and even in \PTime for non-stochastic infinite multiplayer games with, \eg,
B\"uchi winning conditions \cite{Ummels08}.

Apart from settling the complexity of \NE when arbitrary mixed strategies are
considered, future research may, for example, investigate restrictions of \NE
to games with a small number of players.
In particular, we conjecture that the problem is decidable for two-player
games, even if these are not zero-sum.

\bibliographystyle{gbabbrv}
\bibliography{../../biblio/all.bib}

\end{document}